\let\wfs@comment@comment\comment
\let\comment\@undefined
\let\wfs@changes@comment\comment
\let\comment\@undefined
\newcommand\comment{%
    \ifthenelse{\equal{\@currenvir}{comment}}
    {\wfs@comment@comment}
    {\wfs@changes@comment}%
}
\newtheorem{theorem}{Theorem}[section]
\newtheorem{proposition}[theorem]{Proposition}
\newtheorem{remark}[theorem]{Remark}
\newtheorem*{theorem*}{Theorem}
\newcommand\fq{\mathbb{F}_q}
\newcommand\F{\mathbb{F}}
\newcommand\fqn{\mathbb{F}_{q^n}}
\newcommand\Tr{\mathrm{Tr}}
\newcommand\C{\mathcal{C}}
\newcommand\cL{\mathcal{L}}
\title{Non-minimum tensor rank Gabidulin codes}
\author{Daniele Bartoli\thanks{Department of Mathematics and Informatics, University of Perugia, Perugia, Italy.
Email address: daniele.bartoli@unipg.it}\hspace{0.15 cm}  and Giovanni Zini\thanks{Department of Physics, Informatics and Mathematics, University of Modena and Reggio Emilia, Modena, Italy. Email address: giovanni.zini@unimore.it} \hspace{0.15 cm} and  Ferdinando Zullo\thanks{Department of Mathematics and Physics, University of Campania, Caserta, Italy. Email address: ferdinando.zullo@unicampania.it} }
\date{}
\begin{document}

\maketitle

\begin{abstract}
    The tensor rank of some Gabidulin codes of small dimension is investigated. In particular, we determine the tensor rank of any rank metric code equivalent to an $8$-dimensional $\fq$-linear generalized Gabidulin code in $\mathbb{F}_{q}^{4\times4}$. This shows that such a code is never minimum tensor rank.
    In this way, we detect the first infinite family of Gabidulin codes which are not minimum tensor rank.
\end{abstract}

\section{Introduction}

Rank metric codes were introduced by Delsarte \cite{de78} in 1978 and have been used in several contexts, such as crisscross error correction \cite{roth1991maximum}, cryptography \cite{gabidulin1991ideals}, and network coding \cite{silva2008rank}.
Because of their ubiquitous applications, they attracted increasing attention in the last years; see e.g. \cite{GR,PZ,sheekey_newest_preprint}.

Very recently, rank metric codes have been investigated through their tensor rank; see \cite{ByrneNeri,ByrneCotardo,ByrneCotardo2}.
Indeed, a rank metric code $\mathcal{C}$ in $\fq^{n\times m}$ can be seen as the \emph{slice space} of an associated \emph{generator $3$-tensor}, similarly to the case of linear codes in the Hamming metric, where a code can be described as the row space of a generator matrix. Therefore, after Byrne, Neri, Ravagnani and Sheekey \cite{ByrneNeri}, the \textbf{tensor rank} of $\C$ is defined as the tensor rank of a generator tensor of $\C$. 
Determining the tensor rank of a certain rank metric code is a hard problem in general and the exact value is known only for specific classes of codes; indeed the problem of computing the rank of a $3$-tensor is NP-complete \cite{NP}. Several lower and upper bounds for the tensor rank of a rank metric code were presented in \cite{ByrneNeri} and \cite{ByrneCotardo}. In particular, as a consequence of Kruskal's bound \cite{Kru}, the tensor rank of an $h$-dimensional $\fq$-linear rank metric code $\C$ in $\fq^{n\times m}$ of minimum distance $d$ is lower bounded by $h-d+1$.
The code $\C$ is said to be \textbf{minimum tensor rank} (\textbf{MTR} for short) if its tensor rank is exactly $h-d+1$.
The interest for rank metric codes with a low tensor rank is due to the following fact: the smaller the tensor rank of the generating tensors, the more efficient the encoding. 
Via the correspondence in \cite{delacruz} between full rank codes and semifields, the notion of tensor rank for rank metric codes extends the same notion for semifields, which was used as an invariant by Lavrauw in \cite{lavten}.
Moreover, some criteria by Kruskal \cite[Section 4]{Kru} use the rank of a tensor to assure its identifiability, i.e. the uniqueness of the pure tensors appearing in its decomposition, which is of interest for the numerical applications within statistics; see \cite{CO} and \cite[Section 2]{AMR}.

A family of particular interest among rank metric codes is the one of square Gabidulin codes $\mathcal{G}_{k,s}$ in $\fq^{n\times n}$, as they are \emph{maximum rank distance}, and indeed they have been deeply investigated.
However, their tensor rank is not known in general; exact results have been provided in \cite{ByrneNeri} and \cite{ByrneCotardo} when $k\in \{1,n-1\}$ and in few other cases.
Interestingly, when $q$ is large enough, Gabidulin codes with $k\in \{1,n-1\}$ turn out to be MTR codes.

In this paper we are interested in determining the tensor rank of those codes which are equivalent to an $\fq$-linear $8$-dimensional Gabidulin code in $\fq^{4\times 4}$.
The strategy that we apply makes use of \cite[Proposition 3.4]{ByrneNeri}, which involves rank-one matrices. The framework of our arguments is the one of linearized polynomials, where rank-one matrices correspond to trace functions of the shape $\alpha \Tr(\beta x)$, where $\Tr:\mathbb{F}_{q^4}\to\mathbb{F}_q$ and $\Tr(x)= x+x^q+x^{q^2}+x^{q^3}$.
Our main result is the following.
\begin{theorem}\label{Th:main3}
Let $q$ be a prime power, and $\C$ be a code which is equivalent to an $\fq$-linear $8$-dimensional generalized Gabidulin code in $\fq^{4\times 4}$.
Then the tensor rank of $\C$ is $11$ if $q\geq 3$, and $12$ if $q=2$.
In particular, $\C$ is not MTR.
\end{theorem}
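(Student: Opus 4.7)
The plan is to apply \cite[Proposition 3.4]{ByrneNeri}, which expresses the tensor rank of a rank metric code $\C \subseteq \fq^{n\times m}$ as the minimum number $r$ of rank-one matrices whose $\fq$-span contains $\C$. Identifying $\fq^{4\times 4}$ with $\mathrm{End}_{\fq}(\mathbb{F}_{q^4})$ via evaluation of linearized polynomials, the rank-one matrices correspond exactly to the maps $x \mapsto \alpha \Tr(\beta x)$ with $\alpha,\beta \in \mathbb{F}_{q^4}^*$. Since the tensor rank is invariant under code equivalence and every $8$-dimensional $\fq$-linear generalized Gabidulin code in $\fq^{4\times 4}$ is equivalent via a Frobenius twist to $\mathcal{G}_{2,1} = \{a_0 x + a_1 x^q : a_0,a_1 \in \mathbb{F}_{q^4}\}$, the task reduces to determining the minimum number of trace functions $\alpha_i \Tr(\beta_i x)$ whose $\fq$-span contains $\mathcal{G}_{2,1}$.

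For the upper bound I would exhibit an explicit list of $11$ rank-one matrices (respectively $12$ when $q=2$) whose $\fq$-span contains $\mathcal{G}_{2,1}$. After fixing an $\fq$-basis $\gamma_1,\ldots,\gamma_4$ of $\mathbb{F}_{q^4}$, each of the eight basis codewords $\gamma_i x$ and $\gamma_i x^q$ has to be reproducible as an $\fq$-combination of the chosen $\alpha_j \Tr(\beta_j x) = \sum_{k=0}^3 \alpha_j \beta_j^{q^k} x^{q^k}$. The extra three (respectively four) matrices beyond the dimension $8$ are used to cancel the $x^{q^2}$- and $x^{q^3}$-components that unavoidably appear in each trace expansion; the saving of one matrix in the case $q\geq 3$ would rely on the existence of an element of $\fq\setminus\{0,1\}$, enabling a more economical cancellation pattern that is unavailable over $\mathbb{F}_2$.

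The main obstacle is the lower bound, which I would attack by contradiction. Assume $r$ pairs $(\alpha_i,\beta_i)$ suffice with $r\leq 10$ if $q\geq 3$ (resp.\ $r\leq 11$ if $q=2$), and consider the $\fq$-linear map $\Phi\colon \fq^r \to \mathbb{F}_{q^4}^4$ defined by $\Phi(c_1,\ldots,c_r) = \bigl(\sum_i c_i\, \alpha_i \beta_i^{q^k}\bigr)_{k=0,1,2,3}$. The assumed containment is equivalent to $\mathrm{Im}\,\Phi$ containing the $8$-dimensional subspace $\{(a_0,a_1,0,0) : a_0,a_1 \in \mathbb{F}_{q^4}\}$, which forces the projection onto the last two coordinates to have $\fq$-dimension at most $r-8$ and the projection onto the first two to have $\fq$-dimension at least $8$. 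Since the four entries of each row are rigidly linked by Frobenius conjugation of $\beta_i$, this dimensional compression translates into a system of polynomial constraints on the $2r$ parameters. The plan is then to run a case analysis on the possible patterns of $\fq$-linear dependencies among the rows, ruling out each configuration via Frobenius and degree arguments of the kind standard in the linearized polynomial setting. The case $q=2$ demands a separate combinatorial treatment, since the smaller field collapses several of the generic cases available for $q\geq 3$; this degeneracy is precisely what accounts for the jump of the tensor rank from $11$ to $12$.
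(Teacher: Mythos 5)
Your framework is the same as the paper's (Proposition \ref{prop:tenrank}, rank-one matrices as trace maps $\alpha\Tr(\beta x)$, reduction to $\mathcal{G}_{2,1}$ via equivalence), but both substantive claims are announced rather than argued, and each conceals the real difficulty. For the lower bound, noting that $\mathrm{Im}\,\Phi$ contains the $8$-dimensional slice and that the projection onto the last two coordinates has $\fq$-dimension at most $r-8$ is only a reformulation of the statement that the $r$ traces span a space of the form $\mathcal{G}\oplus\langle\text{at most } r-8 \text{ traces}\rangle_{\fq}$. The entire content of the paper's argument lies in what comes next: characterizing which traces $\alpha_3\Tr(\beta_3 x)$ can lie in $H=\mathcal{G}\oplus\langle\alpha_1\Tr(\beta_1x),\alpha_2\Tr(\beta_2x)\rangle_{\fq}$ by eliminating $\gamma,\delta$ from System \eqref{System1}, reducing to the norm condition \eqref{eq:NUM-DEN}, classifying all solutions $(Y,Z)$ of the system \eqref{Eq:Sistemone} via iterated resultants (Theorem \ref{Th:O'Sistemone}), and finally proving that the $10\times 10$ matrix $M$ in \eqref{eq:matrice} always has rank at most $6$ (Theorem \ref{Th:Ranghi}), so that ten $\fq$-independent traces can never be found inside $H$. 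Your proposed ``case analysis on the possible patterns of $\fq$-linear dependencies among the rows, ruled out via Frobenius and degree arguments'' names no concrete invariant and gives no indication of how the case $r=10$ would actually be excluded; as written it restates the goal rather than proving it. Moreover, for $q=2$ you must exclude $r=11$, for which no analytic argument is given in the paper either: the value $12$ there is obtained by exhaustive computer search.

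The upper bound is likewise asserted rather than constructed. Exhibiting $11$ rank-one maps whose $\fq$-span contains $\mathcal{G}_{2,1}$ is not a routine ``more economical cancellation pattern'' available as soon as $\fq$ has an element outside $\{0,1\}$: the paper must choose auxiliary traces coming from solutions of type (C3) of System \eqref{Eq:Sistemone} (namely $Y=1/Z^{q^2+q}$ with $Z\notin\mathbb{F}_{q^2}$ and $Z^{q^2+1}=1$), take eight further traces parametrized by $\lambda_1,\lambda_1^2,\lambda_1^3,\lambda_1^4$ for a suitable $\lambda_1\in\fq^*$ with $\lambda_1^2\ne1\ne\lambda_1^3$, and certify $\fq$-linear independence through a determinant computed with MAGMA; this works only for $q\geq5$, while $q=3$ and $q=4$ (and the value $12$ at $q=2$) are settled computationally in Section \ref{Sec:particular}. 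So the claim that $q\geq3$ can be handled uniformly by hand is unsupported, and your explanation of the jump at $q=2$ is speculation. To turn the plan into a proof you would need either to carry out machinery equivalent to the paper's Steps 1--2 together with the appendix computations (plus computer verification, or a substitute argument, for $q\in\{2,3,4\}$), or to supply a genuinely different argument; the proposal contains neither.
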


The paper is organized as follows.
Section \ref{sec:prelim_codes} contains preliminary notions on rank metric codes and on the correspondence with linearized polynomials in the case of square codes.
Section \ref{sec:prelim_tensors} describes basic definitions and known results about tensors and the tensor rank of square generalized Gabidulin codes.
Section \ref{sec:ten} is devoted to the proof of Theorem \ref{Th:main3}: Section \ref{Sec:non10} shows that $\C$ is not MTR, while in Section \ref{Sec:11} we determine the tensor rank of $\C$ for $q\geq5$.
The remaining small values of $q$, are worked out computationally in Section \ref{Sec:particular}, as well as other Gabidulin codes in $\fq^{n\times n}$ with small values of $q$ and $n$.
Finally, the Appendix contains two auxiliary results which are needed in Section \ref{Sec:non10}, whose proof are quite technical.

\section{Rank metric codes and linearized polynomials}\label{sec:prelim_codes}

The set $\F_q^{n\times m}$ of matrices can be equipped with the \textbf{rank metric}, as
\[d(A,B) = \mathrm{rk}\,(A-B), \quad \mbox{ for } A,B\in\fq^{n\times m}.\]
 A \textbf{rank metric code}  is a subset $\C$  of $\F_q^{n \times m}$ endowed with the rank metric and its \textbf{minimum rank distance} is defined as
\[d := d(\C) =\min\{ d(A,B) \colon A,B \in \C,\,\, A\neq B \}.\]
Two $\fq$-linear rank metric codes $\C$ and $\C'$ in $\mathbb{F}_q^{n\times m}$ are \textbf{linearly equivalent} if and only if there exist $X \in \mathrm{GL}(n,q)$ and $Y \in \mathrm{GL}(m,q)$ such that
\[\C'=\{XC Y \colon C \in \C\},\]
or, if $m=n$,
\[\C'=\{XC^\top Y \colon C \in \C\},\]
where $C^\top$ denote the transpose of $C$.
Since in this paper we will only consider linear equivalence, we will refer to it simply as \emph{equivalence}.

Delsarte showed in \cite{de78} that the parameters of a rank metric code $\C$ satisfy a Singleton-like bound, namely
\[ |\C| \leq q^{\max\{m,n\}(\min\{m,n\}-d+1)}. \]
When equality holds, we call $\C$ a \textbf{maximum rank distance} (\textbf{MRD} for short) code.

In this paper we are interested only in the square case $m=n$, and in this case rank metric codes can be described in terms of linearized polynomials.
Indeed, consider the \textbf{$\fq$-linearized} (or simply \textbf{linearized}) polynomials of normalized degree over $\fqn$, i.e. elements of the form
$$ f(x)=\sum_{i=0}^{n-1}f_i x^{q^i}, \quad f_i \in \fqn.$$
The set of linearized polynomials is an $\fq$-algebra $\cL_{n,q}$ with the usual addition, the scalar multiplication by elements of $\fq$ and the composition modulo $x^{q^n}-x$.
It is well-known that the $\fq$-algebras $\cL_{n,q}$ and $\mathrm{End}_{\fq}(\fqn)$ are isomorphic,
via the correspondence between the linearized polynomial $f(x)$ and the $\fq$-endomorphism
$$ \alpha \longmapsto \sum_{i=0}^{n-1}f_i \alpha^{q^i}$$
of $\fqn$.
Hence, $\cL_{n,q}$ is also isomorphic to the $\fq$-algebra $\fq^{n\times n}$ of $n\times n$ matrices over $\fq$.
In this correspondence, the rank of a matrix in $\fq^{n\times n}$ equals the rank of the corresponding linearized polynomial in $\cL_{n,q}$ as an $\fq$-endomorphism of $\fqn$.
Therefore, rank metric codes in $\fq^{n\times n}$ can be seen as sets of linearized polynomials in $\cL_{n,q}$, so that we can speak of rank metric codes in $\cL_{n,q}$.
Notice that the set of matrices of rank $1$ in $\fq^{n\times n}$ corresponds to the set of elements of $\mathcal{L}_{n,q}$ of the shape $\alpha \Tr(\beta x)$ for some $\alpha,\beta \in \mathbb{F}_{q^n}^*$, where  $\Tr(z)= z+z^q+\cdots+z^{q^{n-1}}$; see \cite[Theorem 2.24]{LR}.
For a reference on linearized polynomials see \cite{wuliu}.

The first class of square MRD codes in the literature was the one of \textbf{generalized Gabidulin codes}, namely the $\fqn$-subspaces
\[ \mathcal{G}_{k,s}=\langle x,x^{q^s},\ldots, x^{q^{s(k-1)}} \rangle_{\fqn} \]
of $\cL_{n,q}$, where $1\leq k\leq n$ and $\gcd(s,n)=1$; they are MRD codes with $\fq$-dimension $kn$ and minimum distance $n-k+1$.
Gabidulin codes where first introduced by Delsarte in \cite{de78} and later by Gabidulin in \cite{gab} in the case $s=1$, and by Gabidulin and Kshevetskiy in \cite{gab2} in the general case.

\section{Tensor rank of generalized Gabidulin codes}\label{sec:prelim_tensors}

The tensors we will investigate in this paper are \textbf{$3$-tensors} in $\fq^h\otimes \fq^n\otimes\fq^m$.
If $\{u_1,\ldots,u_h\}$, $\{v_1,\ldots,v_n\}$, and $\{w_1,\ldots,w_n\}$ are bases of $\fq^h$, $\fq^n$, and $\fq^m$ respectively, then an $\fq$-basis of $\fq^h\otimes \fq^n\otimes\fq^m$ is given by 
\[ \{ u_l\otimes v_i\otimes w_j \colon 1 \leq l \leq h, 1 \leq i \leq n,1\leq j \leq m \}. \]
The tensors of the form $u\otimes v \otimes w$, with $u \in \fq^h$, $v\in\fq^n$ and $w \in \fq^m$, are called \textbf{simple} (or \textbf{pure}) \textbf{tensors}.
The \textbf{tensor rank} of a tensor $X\in \fq^h\otimes \fq^n\otimes\fq^m$ is defined as
\[ \mathrm{trk}(X)=\min\left\{ R\in \mathbb{N}_0\colon X=\sum_{i=1}^R u_i\otimes v_i\otimes w_i,\,\, u_i\in \fq^h, v_i\in\fq^n,w_i \in \fq^m \right\}. \]
Let $[i]=\{1,\ldots,i\}$. A $3$-tensor $X \in \fq^h\otimes \fq^n\otimes\fq^m$ can be represented as a map $X \colon [h]\times [n]\times [m]\rightarrow \fq$ given by $X=(X_{lij}\colon 1\leq l \leq h, 1\leq i\leq n,1\leq j\leq m)$. Therefore $\fq^h\otimes \fq^n\otimes\fq^m$ can be identified with the space $\fq^{h\times n\times m}$, and the tensor $X$ can be written as $X=(X_1,\ldots,X_h)$ with $X_i \in \fq^{n\times m}$.
The \textbf{first slice space} of $X$, denoted by $\mathrm{ss}_1(X)$, is the $\fq$-subspace of $\fq^{n\times m}$ generated by $X_1,\ldots,X_h$.
If $\dim_{\fq}(\mathrm{ss}_1(X))=h$, we say that $X$ is $1$-\textbf{nondegenerate}. 

The following result will be a key tool in our investigation.
\begin{proposition}(see \cite[Proposition 3.4]{ByrneNeri} and \cite[Proposition 14.45]{AlgebraicComplexity})\label{prop:tenrank}
Let $X\in \fq^{h\times n\times m}$ and $R$ be a positive integer. The following are equivalent:
\begin{enumerate}
    \item $\mathrm{trk}(X)\leq R$;
    \item there exist $A_1,\ldots,A_R \in \fq^{n\times m}$ of rank $1$ such that $\mathrm{ss}_1(X)\subseteq \langle A_1,\ldots,A_R \rangle_{\fq}$.
\end{enumerate}
In particular, $\mathrm{trk}(X)= R$ if and only if $R$ is the minimum integer such that there exist $A_1,\ldots,A_R \in \fq^{n\times m}$ of rank $1$ satisfying $\mathrm{ss}_1(X)\subseteq \langle A_1,\ldots,A_R \rangle_{\fq}$.
\end{proposition}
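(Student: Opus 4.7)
The plan is to prove the equivalence directly by translating between a tensor decomposition of $X$ with $R$ summands and a set of $R$ rank-$1$ matrices whose $\fq$-span contains the first slice space $\mathrm{ss}_1(X)$. The bridge between the two formulations is the elementary observation that a simple $3$-tensor $u\otimes v\otimes w\in \fq^h\otimes \fq^n\otimes \fq^m$ has $l$-th slice equal to $u_l\cdot vw^\top$, a scalar multiple of the rank-$1$ matrix $vw^\top$. Slicing a tensor decomposition in the first coordinate therefore automatically exhibits each $X_l$ as an $\fq$-linear combination of rank-$1$ matrices, and conversely a simultaneous expansion of all slices with respect to a common family of rank-$1$ matrices reassembles into a tensor decomposition.

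For the implication $(1)\Rightarrow(2)$, I would start from a decomposition $X=\sum_{i=1}^{R} u_i\otimes v_i\otimes w_i$ (padding with zero summands if $\mathrm{trk}(X)<R$), write $u_i=\sum_{l=1}^h u_{i,l}\,e_l$ in the standard basis of $\fq^h$, and read off $X_l=\sum_{i=1}^R u_{i,l}\,v_iw_i^\top$. Setting $A_i:=v_iw_i^\top$, this gives $\mathrm{ss}_1(X)\subseteq\langle A_1,\ldots,A_R\rangle_{\fq}$; any $A_i$ that happens to vanish can be replaced by an arbitrary rank-$1$ matrix without disturbing the inclusion, so one obtains $R$ genuine rank-$1$ matrices as required.

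For $(2)\Rightarrow(1)$, I would factor each rank-$1$ matrix as $A_i=v_iw_i^\top$ with $v_i\in\fq^n$ and $w_i\in\fq^m$, and for each $l$ fix scalars $\lambda_{l,i}\in\fq$ realising $X_l=\sum_{i=1}^R \lambda_{l,i}\,v_iw_i^\top$. Defining $u_i:=(\lambda_{1,i},\ldots,\lambda_{h,i})^\top\in\fq^h$, plugging the expansion of the slices into the identity $X=\sum_{l=1}^h e_l\otimes X_l$ and swapping the two summations produces $X=\sum_{i=1}^{R} u_i\otimes v_i\otimes w_i$, hence $\mathrm{trk}(X)\leq R$. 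The final ``in particular'' clause is then immediate from the definition of $\mathrm{trk}(X)$ as the minimum $R$ admitting such a decomposition.

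There is no substantive obstacle in this argument: the statement is essentially an unpacking of definitions once one matches simple tensors with rank-$1$ matrices in the two unsliced factors. The only point demanding some care is the bookkeeping in $(1)\Rightarrow(2)$ concerning $A_i$'s that may vanish when the decomposition has fewer than $R$ terms, together with a consistent convention for slicing, so that a slice in the first coordinate becomes an $\fq$-combination of rank-$1$ matrices built from the remaining two factors.
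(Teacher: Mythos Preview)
Your argument is correct and is precisely the standard direct proof of this fact. Note, however, that the paper does not actually supply a proof of this proposition: it is quoted from the literature (namely \cite[Proposition~3.4]{ByrneNeri} and \cite[Proposition~14.45]{AlgebraicComplexity}) and used as a black box, so there is nothing in the paper to compare your argument against beyond confirming that your unpacking of the definitions matches the intended statement.
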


Kruskal in \cite{Kru} bounded the tensor rank of a $3$-tensor, using the following map:
\[ m_1\colon \fq^{s\times h}\times \fq^{h\times n\times m}\rightarrow \fq^{s\times n\times m}, \quad \left(A,\sum_i u_i\otimes v_i\otimes w_i\right)\mapsto \sum_i(Au_i)\otimes v_i\otimes w_i.  \]
\begin{theorem}(see \cite[Corollary 1]{Kru})\label{th:Kru}
Let $X\in \fq^{h\times n\times m}$ be $1$-nondegenerate, then
\[ \mathrm{trk}(X)\geq h+\min\{ \mathrm{trk}(m_1(u,X)) \colon u \in \fq^h\setminus\{0\} \}-1. \]
\end{theorem}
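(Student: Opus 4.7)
The plan is to derive the bound from a pure-tensor decomposition of $X$ of minimum length by contracting against a carefully chosen vector $u\in\fq^h$ and reducing the problem to a rank estimate for a single matrix, using the elementary fact that the tensor rank of an $n\times m$ matrix (viewed as a $2$-tensor) coincides with its ordinary matrix rank.

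First I would set $R:=\mathrm{trk}(X)$ and fix a decomposition $X=\sum_{i=1}^R a_i\otimes b_i\otimes c_i$ with $a_i\in\fq^h$, $b_i\in\fq^n$, $c_i\in\fq^m$. Unfolding along the first mode gives the slices $X_k=\sum_{i=1}^R a_{ik}\,b_ic_i^\top$, so for every $u\in\fq^h$ one has the identity
\[ m_1(u,X)\;=\;\sum_{k=1}^h u_k X_k\;=\;\sum_{i=1}^R \langle u,a_i\rangle\, b_ic_i^\top, \]
which realizes $m_1(u,X)$ as an explicit sum of rank-one matrices.

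Before the main estimate I would dispatch the role of $1$-nondegeneracy. If $a_1,\ldots,a_R$ did not span $\fq^h$, some nonzero $u$ would annihilate all of them, forcing $\sum_k u_k X_k=0$ and contradicting $\dim_{\fq} \mathrm{ss}_1(X)=h$. In particular, $m_1(u,X)\neq 0$ for every nonzero $u$, so the minimum in the statement ranges over a set of nonzero matrices; the same argument also forces $R\geq h$, making the later choice of $h-1$ distinct indices meaningful.

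The main step is then to choose $u$ killing many of the inner products $\langle u,a_i\rangle$. Picking any $h-1$ of the $a_i$'s, their span has dimension at most $h-1$, so its orthogonal complement contains a nonzero $u$. For such a $u$, at least $h-1$ of the scalars $\langle u,a_i\rangle$ vanish, leaving at most $R-(h-1)$ nonzero terms in the sum above; subadditivity of matrix rank yields $\mathrm{trk}(m_1(u,X))\leq R-h+1$, and taking the minimum over nonzero $u$ and rearranging gives the claimed inequality. The proof is essentially self-contained and without any real obstacle; the only subtle point is the appeal to $1$-nondegeneracy, which both makes the minimum well-defined and ensures the bound is not vacuous.
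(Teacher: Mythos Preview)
The paper does not prove this theorem; it is quoted from Kruskal's original article and stated without proof. Your argument is correct and is essentially the standard proof: take a minimal decomposition $X=\sum_{i=1}^{R}a_i\otimes b_i\otimes c_i$, observe that $1$-nondegeneracy forces $a_1,\ldots,a_R$ to span $\fq^h$ (hence $R\geq h$), pick a nonzero $u$ annihilating $h-1$ chosen $a_i$'s, and read off $\mathrm{trk}(m_1(u,X))\leq R-(h-1)$ from the surviving rank-one terms. The only cosmetic remark is that the claim ``$m_1(u,X)\neq 0$ for every nonzero $u$'' is not needed to make the minimum well-defined (the zero matrix has tensor rank $0$ and the inequality would still hold, indeed trivially), though it is a true and pleasant consequence of $1$-nondegeneracy.
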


Tensors are related to rank metric codes as follows. Let $\C$ be an $\fq$-linear code in $\fq^{n\times m}$ of dimension $h$ and minimum distance $d$. A \textbf{generator tensor} for $\C$ is a $3$-tensor $X\in \fq^{h\times n\times m}$ such that $\mathrm{ss}_1(X)=\C$.
As proved in \cite[Proposition 4.2]{ByrneNeri}, two generator tensors of the same rank metric code $\C$ have the same tensor rank. Therefore, we can define the \textbf{tensor rank} $\mathrm{trk}(\C)$ of $\C$ as the tensor rank of any generator tensor of $\C$.

\begin{proposition}(see \cite[Proposition 4.5]{ByrneNeri})\label{prop:equiv}
If $\C,\C^{\prime}$ are equivalent codes, then $\mathrm{trk}(\C)=\mathrm{trk}(\C^{\prime})$.
\end{proposition}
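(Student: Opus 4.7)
The plan is to exploit the rank-one-cover characterization of tensor rank given by Proposition \ref{prop:tenrank}: for any rank metric code $\mathcal{D}\subseteq\fq^{n\times m}$, its tensor rank equals the minimum integer $R$ such that $\mathcal{D}$ is contained in the $\fq$-span of $R$ rank-one matrices. Since this characterization only depends on $\mathcal{D}$ as an $\fq$-subspace of $\fq^{n\times m}$ (not on the choice of generator tensor), the statement reduces to showing that equivalence transformations send rank-one covers to rank-one covers of the same size.

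Concretely, I would first fix $X\in\mathrm{GL}(n,q)$ and $Y\in\mathrm{GL}(m,q)$ realizing the equivalence, so that either $\C'=\{XCY\colon C\in\C\}$ or (in the square case) $\C'=\{XC^{\top}Y\colon C\in\C\}$. Both maps $\varphi\colon \fq^{n\times m}\to \fq^{n\times m}$ obtained in this way are $\fq$-linear bijections that preserve the rank of every matrix; in particular, they send rank-one matrices to rank-one matrices. Hence, if $\C\subseteq\langle A_1,\ldots,A_R\rangle_{\fq}$ with each $A_i$ of rank $1$, then applying $\varphi$ gives
\[
\C'=\varphi(\C)\subseteq\langle \varphi(A_1),\ldots,\varphi(A_R)\rangle_{\fq},
\]
where each $\varphi(A_i)$ still has rank $1$. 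By Proposition \ref{prop:tenrank}, this yields $\mathrm{trk}(\C')\leq\mathrm{trk}(\C)$.

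For the reverse inequality, I would apply exactly the same argument to the inverse equivalence, which is induced by $X^{-1}$ and $Y^{-1}$ (together with transposition again in the twisted case, using that $(C^{\top})^{\top}=C$). This gives $\mathrm{trk}(\C)\leq\mathrm{trk}(\C')$, and the two inequalities together conclude the proof.

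There is essentially no serious obstacle here: the whole argument rests on the facts that left-/right-multiplication by invertible matrices and transposition are $\fq$-linear, bijective, and rank-preserving on $\fq^{n\times m}$, hence map $\fq$-spans of rank-one matrices to $\fq$-spans of the same number of rank-one matrices. The only point to be slightly careful about is to treat the two possible forms of equivalence (with and without transpose) uniformly, which is done simply by noting that transposition is itself an $\fq$-linear rank-preserving bijection.
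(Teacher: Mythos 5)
Your argument is correct. Note that the paper does not prove this proposition at all: it is quoted verbatim from \cite[Proposition 4.5]{ByrneNeri}, so there is no in-paper proof to match. Your route — reduce everything to the rank-one-cover characterization of Proposition \ref{prop:tenrank}, observe that $C\mapsto XCY$ and $C\mapsto XC^{\top}Y$ with $X,Y$ invertible are $\fq$-linear rank-preserving bijections of $\fq^{n\times m}$, hence carry a cover of $\C$ by $R$ rank-one matrices to a cover of $\C'$ by $R$ rank-one matrices, and then invert — is a clean self-contained proof, and it fits well with how the present paper actually uses tensor rank (always through rank-one covers of the code as a subspace). The argument in \cite{ByrneNeri} is phrased instead at the level of generator tensors: if $\sum_i u_i\otimes v_i\otimes w_i$ is a generator tensor of $\C$, then acting by invertible maps on the second and third tensor factors produces a generator tensor of $\C'$ with the same number of simple summands; the two proofs are essentially dual descriptions of the same fact, yours emphasizing Proposition \ref{prop:tenrank}, theirs the covariance of generator tensors under the equivalence group. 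One small imprecision, which does not affect correctness: in the transpose case the inverse equivalence is $C'\mapsto (Y^{\top})^{-1}(C')^{\top}(X^{\top})^{-1}$ rather than being induced by $X^{-1},Y^{-1}$ themselves; but since all you need for the reverse inequality is that the inverse map is again an $\fq$-linear rank-preserving bijection (which the inverse of any such map automatically is), your conclusion stands as written.
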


By Theorem \ref{th:Kru},
\begin{equation}\label{eq:bound} \mathrm{trk}(\C)\geq h+d-1. \end{equation}
If $\C$ attains equality in \eqref{eq:bound}, it is called a \textbf{minimum tensor rank} (\textbf{MTR} for short) code.

Although Gabidulin codes form the most studied family of rank metric codes, the complete determination of their tensor rank is still missing.
We now describe the known results on the tensor rank of square Gabidulin codes $\mathcal{G}_{k,s}\subset\mathcal{L}_{n,q}$.
Bound \eqref{eq:bound} reads as follows.
\begin{theorem}
For every $k\leq n$, we have $\mathrm{trk}(\mathcal{G}_{k,s})\geq(k+1)n-k$.
\end{theorem}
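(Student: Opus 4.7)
The statement is essentially immediate from the Kruskal-type bound \eqref{eq:bound} applied to the generalized Gabidulin code $\mathcal{G}_{k,s}$, so my plan is just to substitute its parameters into that inequality. First, I would recall from Section \ref{sec:prelim_codes} that $\mathcal{G}_{k,s}$ is defined as the $\fqn$-span of $\{x, x^{q^s}, \ldots, x^{q^{s(k-1)}}\}$ inside $\cL_{n,q}$, hence its $\fqn$-dimension is $k$ and its $\fq$-dimension is
\[ h = kn. \]
Since $\mathcal{G}_{k,s}$ is MRD, the Singleton-like bound of Delsarte is tight, which gives minimum distance
\[ d = n-k+1. \]

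Next, I would verify that $\mathcal{G}_{k,s}$ is $1$-nondegenerate as a code (i.e. any generator tensor has a slice space of full dimension $h$), which holds trivially because the slice space of a generator tensor equals $\mathcal{C}$ by definition of generator tensor, so $\dim_{\fq}(\mathrm{ss}_1(X))=h$. This ensures Theorem \ref{th:Kru}, and hence the bound \eqref{eq:bound}, applies.

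Finally, plugging the values of $h$ and $d$ into \eqref{eq:bound} yields
\[ \mathrm{trk}(\mathcal{G}_{k,s}) \;\geq\; h + d - 1 \;=\; kn + (n-k+1) - 1 \;=\; (k+1)n - k, \]
which is the desired inequality. There is no real obstacle here: the only content is the identification of the parameters $(h,d)$ of $\mathcal{G}_{k,s}$, which has already been recorded in the preliminaries, and the substitution into the already-established lower bound.
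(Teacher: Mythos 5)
Your proposal is correct and matches the paper's approach exactly: the paper obtains this theorem by directly specializing the bound \eqref{eq:bound} to $\mathcal{G}_{k,s}$, using that its $\fq$-dimension is $h=kn$ and its minimum distance is $d=n-k+1$ since it is MRD. Your additional remark that a generator tensor is automatically $1$-nondegenerate is a harmless verification the paper leaves implicit.
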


The tensor rank of $\mathcal{G}_{1,s}$ coincides with the tensor rank of the field $\fqn$ (see \cite{Knuth} and \cite{Liebler} where semifields were described for the first time in terms of tensors). 
By \cite[Propositions 14.47 and 14.48]{AlgebraicComplexity} and a link with a well-studied tensor pointed out in \cite[Lemma 5.13]{ByrneNeri}, it follows that $\mathrm{trk}(\mathcal{G}_{1,s})=2n-1$ if $q\geq 2n-1$, and $\mathrm{trk}(\mathcal{G}_{1,s})>2n-1$ if $q\leq 2n-2$.
For $n=3$, $\mathrm{trk}(\mathcal{G}_{1,s})=6$ if $q \in \{2,3\}$ 
(see \cite[Lemma 15]{LavrauwSheekey} and also \cite{LPZ}).
For $n=4$, $\mathrm{trk}(\mathcal{G}_{1,s})=9$ if $q\in \{2,3\}$ (see \cite[Example 6.4]{ByrneNeri} for $q=2$ and \cite[Theorem 4]{LavrauwSheekey} for $q=3$), while $\mathrm{trk}(\mathcal{G}_{1,s})$ is unknown for $q\in\{4,5\}$.
Further bounds and asymptotic results for the tensor rank of $\fqn$ are known, see e.g.\ \cite{surveyTensorRank}. 

The following upper bound follows from  the tensor rank of $\mathcal{G}_{1,s}$.

\begin{theorem}(see \cite[Proposition 5.15]{ByrneNeri})
Let $q\geq 2n-2$. For every $k\leq n$, we have $\mathrm{trk}(\mathcal{G}_{1,s})\leq\min\{n^2, k(2n - 1)\}$.
\end{theorem}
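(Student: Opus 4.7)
The statement contains two upper bounds. The bound $\mathrm{trk}(\mathcal{G}_{k,s})\leq n^2$ is immediate from Proposition \ref{prop:tenrank}: the whole ambient space $\fq^{n\times n}$, and hence the subcode $\mathcal{G}_{k,s}$, is contained in the $\fq$-span of the $n^2$ rank-one elementary matrices $E_{ij}$. So the real content is the bound $\mathrm{trk}(\mathcal{G}_{k,s})\leq k(2n-1)$.

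The plan for this second bound is to exploit the natural additive decomposition
\[ \mathcal{G}_{k,s} = \sum_{i=0}^{k-1} \C_i, \qquad \C_i := \{\alpha\, x^{q^{si}}:\alpha\in\fqn\}, \]
coming directly from the defining generating set of $\mathcal{G}_{k,s}$ as an $\fqn$-subspace of $\cL_{n,q}$. Each summand $\C_i$ is obtained from $\mathcal{G}_{1,s}=\{\alpha x:\alpha\in\fqn\}$ by post-composition with the bijective linearized polynomial $x^{q^{si}}$. On the matrix side, this post-composition is the right-multiplication by the fixed invertible matrix $F_i\in\mathrm{GL}(n,q)$ representing the $\fq$-automorphism $y\mapsto y^{q^{si}}$ of $\fqn$. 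Thus $\C_i$ is linearly equivalent to $\mathcal{G}_{1,s}$, and Proposition \ref{prop:equiv} gives $\mathrm{trk}(\C_i)=\mathrm{trk}(\mathcal{G}_{1,s})$.

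The remaining step is a subadditivity argument that follows directly from the characterization in Proposition \ref{prop:tenrank}: if, for each $i$, one has $\C_i\subseteq\langle A^{(i)}_1,\ldots,A^{(i)}_{R_i}\rangle_{\fq}$ with rank-one $A^{(i)}_j$, then the sum $\mathcal{G}_{k,s}=\sum_i\C_i$ lies in the $\fq$-span of the union of these rank-one matrices. Taking $R_i=\mathrm{trk}(\C_i)$ and applying Proposition \ref{prop:tenrank} in the converse direction yields
\[ \mathrm{trk}(\mathcal{G}_{k,s}) \leq \sum_{i=0}^{k-1}\mathrm{trk}(\C_i) = k\cdot\mathrm{trk}(\mathcal{G}_{1,s}). \]
Under the hypothesis on $q$, the results recalled earlier in this section give $\mathrm{trk}(\mathcal{G}_{1,s})\leq 2n-1$, so combining with Step 1 we obtain $\mathrm{trk}(\mathcal{G}_{k,s})\leq\min\{n^2,k(2n-1)\}$.

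I do not expect any technical obstacle: the whole argument is a two-line reduction once the decomposition $\mathcal{G}_{k,s}=\sum_i\C_i$ into $k$ linearly equivalent copies of $\mathcal{G}_{1,s}$ is written down. The only genuine ingredient is the equivalence invariance of tensor rank (Proposition \ref{prop:equiv}) together with the rank-one-covering characterization (Proposition \ref{prop:tenrank}), both of which are available in the excerpt; the value $\mathrm{trk}(\mathcal{G}_{1,s})\leq 2n-1$ is quoted from the results on the tensor rank of the field extension $\fqn$.
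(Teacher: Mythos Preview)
The paper does not give a proof of this statement; it is quoted from \cite[Proposition 5.15]{ByrneNeri} without argument. Your proposal is correct and is the natural proof: decompose $\mathcal{G}_{k,s}=\sum_{i=0}^{k-1}\C_i$ with $\C_i=\{\alpha x^{q^{si}}:\alpha\in\fqn\}$, observe that each $\C_i$ is linearly equivalent to $\mathcal{G}_{1,s}$ (right multiplication by the invertible matrix of $x\mapsto x^{q^{si}}$), and combine Proposition~\ref{prop:equiv} with the subadditivity that follows immediately from Proposition~\ref{prop:tenrank}. Two small remarks: the statement as printed has $\mathcal{G}_{1,s}$ on the left-hand side, clearly a typo for $\mathcal{G}_{k,s}$, which you have read correctly; and the hypothesis $q\geq 2n-2$ is one off from the condition $q\geq 2n-1$ quoted just above for $\mathrm{trk}(\mathcal{G}_{1,s})=2n-1$, but that is an inconsistency in the stated hypotheses, not a flaw in your argument.
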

A partial result is known also in the case of Gabidulin codes  $\mathcal{G}_{n-1,s}$.
\begin{theorem}(see \cite[Theorem 5.15]{ByrneCotardo})
Let $q\geq n$. Then $\mathrm{trk}(\mathcal{G}_{n-1,s})=n^2-n+1$.
\end{theorem}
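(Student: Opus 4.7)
The plan is to establish the two matching bounds separately. For the lower bound, the code $\mathcal{G}_{n-1,s}$ has $\fq$-dimension $h=(n-1)n$ and, being MRD, minimum rank distance $d=n-(n-1)+1=2$; bound \eqref{eq:bound} then yields $\mathrm{trk}(\mathcal{G}_{n-1,s})\geq h+d-1=n^2-n+1$, requiring no hypothesis on $q$.

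For the upper bound I would first reduce to $s=1$: the relabeling $x^{q^{si}}\mapsto x^{q^i}$, well-defined since $\gcd(s,n)=1$, is a linear equivalence $\mathcal{G}_{n-1,s}\sim\mathcal{G}_{n-1,1}$, and tensor rank is an equivalence invariant by Proposition \ref{prop:equiv}. Writing $f=\sum_{i=0}^{n-1}f_i x^{q^i}$, the code $\mathcal{G}_{n-1,1}$ is the hyperplane $\{f_{n-1}=0\}$. By Proposition \ref{prop:tenrank}, it suffices to exhibit $n^2-n+1$ rank-1 polynomials $\alpha\Tr(\beta x)$ whose $\fq$-span contains $\mathcal{G}_{n-1,1}$. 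Since the coefficient $\alpha\beta^{q^{n-1}}$ of $x^{q^{n-1}}$ in $\alpha\Tr(\beta x)$ is always nonzero, no nonzero rank-1 element lies in $\mathcal{G}_{n-1,1}$; consequently any such family must $\fq$-span the subspace
\[ V:=\mathcal{G}_{n-1,1}\oplus\langle\Tr(x)\rangle_{\fq} \]
of $\fq$-dimension $n^2-n+1$. For each $\beta\in\fqn^*$ introduce the normalized rank-1 element
\[ B_\beta:=\beta^{-q^{n-1}}\Tr(\beta x)=\sum_{i=0}^{n-1}\beta^{q^i-q^{n-1}}x^{q^i}, \]
whose coefficient of $x^{q^{n-1}}$ is identically $1$; hence $B_\beta\in V$ and $B_\beta-B_1\in\mathcal{G}_{n-1,1}$.

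The crux, and the main obstacle, is to show that $\{B_\beta:\beta\in\fqn^*\}$ $\fq$-spans $V$, equivalently that $\{B_\beta-B_1\}_\beta$ $\fq$-spans the $n(n-1)$-dimensional space $\mathcal{G}_{n-1,1}$. I would argue this dually: via the trace pairing, every $\fq$-linear functional on $\mathcal{G}_{n-1,1}$ takes the form $\phi(f)=\Tr\!\bigl(\sum_{i=0}^{n-2}a_i f_i\bigr)$ for some $(a_0,\ldots,a_{n-2})\in\fqn^{n-1}$, so the spanning statement is equivalent to showing that the condition $\phi(B_\beta-B_1)=0$ for every $\beta\in\fqn^*$ forces all $a_i=0$. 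Substituting $\beta=\gamma^q$ (a bijection of $\fqn^*$) and using $\gamma^{q^n}=\gamma$ on $\fqn^*$, one obtains $\beta^{q^i-q^{n-1}}=\gamma^{q^{i+1}-1}$, so the condition rewrites as
\[ \Tr\!\Bigl(\sum_{j=1}^{n-1} a_{j-1}\bigl(\gamma^{q^j-1}-1\bigr)\Bigr)=0 \qquad \text{for every }\gamma\in\fqn^*. \]
Expanding the outer trace as a sum of Frobenius shifts and reducing modulo $\gamma^{q^n-1}-1$ turns this into a polynomial identity on $\fqn^*$ whose monomials are $\gamma^{q^k(q^j-1)}$ for $1\leq j\leq n-1$ and $0\leq k\leq n-1$. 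The hypothesis $q\geq n$ is precisely what ensures that these $n(n-1)$ exponents are pairwise distinct (and all nonzero) modulo $q^n-1$, so that the identity decouples into $n(n-1)$ independent scalar equations $a_{j-1}^{q^k}=0$, forcing all $a_{j-1}$ to vanish by a Moore/Vandermonde-type non-degeneracy argument on the Frobenius-shifted monomials.

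Once the spanning is established, one extracts an $\fq$-basis of $V$ from $\{B_\beta\}_\beta$, producing exactly $n^2-n+1$ rank-1 linearized polynomials whose $\fq$-span contains $\mathcal{G}_{n-1,1}$; Proposition \ref{prop:tenrank} then gives $\mathrm{trk}(\mathcal{G}_{n-1,1})\leq n^2-n+1$. Combined with the matching lower bound and the reduction to $s=1$, this proves the claimed equality $\mathrm{trk}(\mathcal{G}_{n-1,s})=n^2-n+1$.
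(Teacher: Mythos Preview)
The paper does not supply a proof of this statement; it is quoted from \cite[Theorem~5.15]{ByrneCotardo} as background. So there is no in-paper argument to compare against, and I assess your proposal on its own.

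Your argument is correct. The lower bound is immediate from \eqref{eq:bound}, and for the upper bound the key step---showing that the normalized rank-one elements $B_\beta=\beta^{-q^{n-1}}\Tr(\beta x)$ $\fq$-span $V=\mathcal{G}_{n-1,1}\oplus\langle\Tr(x)\rangle_{\fq}$---is carried out cleanly via the dual polynomial-identity computation. Two expository points deserve adjustment, though neither breaks the proof.

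First, the sentence ``consequently any such family must $\fq$-span the subspace $V$'' is not what you actually use: what matters is that your specific family $\{B_\beta\}$ lies in $V$ (since each $B_\beta$ has $x^{q^{n-1}}$-coefficient $1\in\fq$) and spans it.

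Second, and more substantively, you assert that $q\geq n$ is ``precisely what ensures'' that the $n(n-1)$ exponents $q^k(q^j-1)$ are pairwise distinct modulo $q^n-1$. In fact they are distinct for \emph{every} prime power $q$. Writing the reduced exponent as $q^m-q^k\pmod{q^n-1}$ with $m=(k+j)\bmod n\ne k$: if $m_1>k_1$ and $m_2>k_2$ (or both inequalities reversed), comparing $q$-adic valuations of $q^{m_i}-q^{k_i}=q^{k_i}(q^{m_i-k_i}-1)$ forces $(m_1,k_1)=(m_2,k_2)$; in the mixed case one would need $q^{m_1}+q^{k_2}-q^{k_1}-q^{m_2}=q^n-1$, but the left side is at most $2q^{n-1}-2<q^n-1$. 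Hence your proof actually establishes $\mathrm{trk}(\mathcal{G}_{n-1,s})=n^2-n+1$ for all $q$, consistent with the entries for $(n,k,q)\in\{(3,2,2),(4,3,2),(4,3,3)\}$ in the table of Section~\ref{Sec:particular}. (Once each monomial coefficient $a_{j-1}^{q^k}$ is forced to vanish individually, $a_{j-1}=0$ is immediate; no Moore/Vandermonde argument is needed.)
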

The tensor rank of Gabidulin codes $\mathcal{G}_{k,s}$ with $k \notin \{1,n-1\}$ is not known.
In this paper we study the first open case, namely $k=2$ and $n=4$.
In Section \ref{Sec:particular} we will investigate the remaining open cases when $n\leq 4$.

\section{The tensor rank of $\mathcal{G}_{2,1}\subset\mathcal{L}_{4,q}$}\label{sec:ten}

The two $8$-dimensional generalized Gabidulin codes $\mathcal{G}_{2,1}$ and $\mathcal{G}_{2,3}$ in $\mathcal{L}_{4,q}$ are easily seen to be equivalent. Therefore, by Proposition \ref{prop:equiv}, in order to prove Theorem \ref{Th:main3} it is enough to prove it for the Gabidulin code $\mathcal{G}=\mathcal{G}_{2,1}=\langle x,x^q\rangle_{\mathbb{F}_{q^4}}$.
In Section \ref{Sec:non10} we show that the tensor rank of $\mathcal{G}$ is not $10$ for any $q$.
In Section \ref{Sec:11} we prove that the tensor rank of $\mathcal{G}$ is $11$ if $q\geq5$. We complete the proof in Section \ref{Sec:particular}, where we determine the tensor rank of some Gabidulin codes for some values of $q$.

\subsection{The tensor rank of $\mathcal{G}$ is larger than $10$}\label{Sec:non10}

This section is devoted to the proof of the following theorem.
\begin{theorem}\label{Th:main}
For any prime power $q$, we have $\mathrm{trk}(\mathcal{G})\geq11$. Thus, $\mathcal{G}$ is not an MTR code.
\end{theorem}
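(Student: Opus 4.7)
The proof is by contradiction. Suppose $\mathrm{trk}(\mathcal{G})\leq 10$. By Proposition \ref{prop:tenrank}, there exist $10$ rank-one matrices whose $\fq$-span $V$ contains $\mathcal{G}$; minimality of the tensor rank forces these matrices to be $\fq$-linearly independent, so $\dim_{\fq}V=10$. In the linearized-polynomial setting, each such matrix is a trace function $v_i=\alpha_i\Tr(\beta_i x)$ with $\alpha_i,\beta_i\in\mathbb{F}_{q^4}^*$.

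Let $\pi\colon \cL_{4,q}\to\mathbb{F}_{q^4}^2$ be the projection onto the coefficients of $x^{q^2}$ and $x^{q^3}$, so that $\ker\pi=\mathcal{G}$. Then $W:=\pi(V)$ has $\fq$-dimension $2$. Relabelling if necessary, we may assume that $\pi(v_1),\pi(v_2)$ form an $\fq$-basis of $W$, which yields the direct-sum decomposition $V=\mathcal{G}\oplus \fq v_1\oplus \fq v_2$. For $i\geq 3$ we write uniquely $v_i=g_i+c_iv_1+d_iv_2$ with $g_i\in\mathcal{G}$ and $c_i,d_i\in\fq$. The $\fq$-independence of $v_1,\ldots,v_{10}$ then forces $\{g_3,\ldots,g_{10}\}$ to be an $\fq$-basis of $\mathcal{G}$. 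Since each $g_i$ is a signed sum of three rank-one matrices, $\mathrm{rk}(g_i)\leq 3$; combined with $\mathcal{G}$ being MRD with minimum distance $3$ and $g_i\neq 0$, this forces $\mathrm{rk}(g_i)=3$.

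Next, a direct analysis of the trace parametrization shows that for every $(c,d)\in\fq^2$ the affine coset $\mathcal{G}+cv_1+dv_2$ contains at most one rank-one element $v(c,d)$, and such element exists if and only if the norm identity $N_{\mathbb{F}_{q^4}/\fq}(cA+dB)=N_{\mathbb{F}_{q^4}/\fq}(cC+dD)$ holds, where $(A,C)=\pi(v_1)$ and $(B,D)=\pi(v_2)$. Moreover, when $cd=0$ the element $v(c,d)$ is just a scalar multiple of $v_1$ or $v_2$, so $g(c,d):=v(c,d)-cv_1-dv_2=0$. Consequently, the $8$ nonzero basis elements $g_i$ must arise from pairs $(c_i,d_i)\in(\fq^*)^2$, a set of only $(q-1)^2$ pairs; for $q\in\{2,3\}$ this gives $(q-1)^2<8$, yielding an immediate contradiction.

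For $q\geq 4$ one has $(q-1)^2\geq 9$ and the pigeonhole alone does not suffice. The main remaining task is to show that the $\fq$-span of the family $\{g(c,d):(c,d)\in(\fq^*)^2\}$ is always a proper $\fq$-subspace of $\mathcal{G}$, contradicting the $g_i$'s being a basis. I expect this to be the hardest step: the map $(c,d)\mapsto g(c,d)$ is genuinely nonlinear, since recovering $\beta$ from the projection requires inverting the map $x\mapsto x^{q-1}$, so the coefficients of $g(c,d)=P(c,d)x+Q(c,d)x^q$ are Frobenius-twisted rational expressions in $c,d$ over $\mathbb{F}_{q^4}$. The two technical auxiliary lemmas announced for the Appendix presumably encode the algebraic identities yielding the required linear dependence, likely by extracting determinantal vanishing conditions from the norm equation $N(cA+dB)=N(cC+dD)$ and using them to cut the dimension of the image family down to at most $7$.
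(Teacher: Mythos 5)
Your setup is sound and runs parallel to the paper's Step 1: the decomposition $V=\mathcal{G}\oplus\langle v_1\rangle_{\fq}\oplus\langle v_2\rangle_{\fq}$, the fact that each coset $\mathcal{G}+cv_1+dv_2$ contains at most one rank-one element whose existence is governed by a norm condition on the projected coefficients (this is exactly the content of the last two equations of System \eqref{System1} and of \eqref{eq:beta}), and the conclusion that the eight further rank-one generators must come from pairwise distinct pairs $(c_i,d_i)\in(\fq^*)^2$ are all correct. In fact you undersell this part: proportional pairs $(c,d)$ and $(tc,td)$ give $\fq$-proportional rank-one elements, so there are at most $q-1$ essentially distinct such elements and pure counting already kills every $q\le 8$, not just $q\in\{2,3\}$. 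But for the remaining $q$ your argument stops exactly where the real work begins: the claim that the span of $\{g(c,d)\colon (c,d)\in(\fq^*)^2\ \text{admissible}\}$ is a proper subspace of $\mathcal{G}$ is stated as an expectation, and ``presumably the appendix lemmas encode this'' is not a proof. This is a genuine gap, and it is the heart of the theorem.

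Concretely, two things are missing. First, the norm condition $\mathrm{N}_{\mathbb{F}_{q^4}/\fq}(cA+dB)=\mathrm{N}_{\mathbb{F}_{q^4}/\fq}(cC+dD)$ is a homogeneous quartic in $(c,d)$ whose coefficients at $c^4$ and $d^4$ vanish; one must show that either it is not identically zero, in which case there are at most two non-proportional admissible pairs with $cd\ne0$ (far fewer than the eight you need), or it vanishes identically, which happens precisely when $Y=\beta_1/\beta_2$ and $Z=\alpha_1/\alpha_2$ satisfy System \eqref{Eq:Sistemone}; the solutions of that system must then be classified, which is Theorem \ref{Th:O'Sistemone} ($Y\in\fq$ or $Z\in\fq$, or $Z=\rho Y^{q+1}$, or $Y=\rho/Z^{q^2+q}$). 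Second, in the identically-vanishing cases there really are up to $q-1$ admissible rank-one elements, and one must exhibit a nontrivial $\fq$-linear dependence among any ten of them together with $v_1,v_2$; in the paper this is done by encoding the coefficient identities in the $10\times10$ matrix \eqref{eq:matrice} and proving in Theorem \ref{Th:Ranghi} that its rank is $2$ or $6$ in the respective cases, hence $<10$. Your proposed route via ``determinantal vanishing conditions cutting the image family down to dimension at most $7$'' is a plausible reformulation of this step, but nothing in your write-up establishes it, so as written you have proved the bound only for $q\in\{2,3\}$ (or $q\le 8$ with the proportionality remark), not for all $q$.
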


By Proposition \ref{prop:tenrank} and Section \ref{sec:prelim_codes}, $\mathrm{trk}(\mathcal{G})=10$ if and only if there exist $10$ trace functions $\alpha_i \Tr(\beta_i x)$ such that $\mathcal{G}\subseteq \langle \alpha_1\Tr(\beta_1 x),\ldots, \alpha_{10}\Tr(\beta_{10} x)\rangle_{\mathbb{F}_{q}}$.
This is equivalent to say that there exist $\alpha_1\Tr(\beta_1 x),\alpha_2\Tr(\beta_2 x)$ such that there exists an $\mathbb{F}_q$-basis of $ \langle x,x^q\rangle_{\F_{q^4}}\oplus\langle \alpha_1\Tr(\beta_1 x),\alpha_2\Tr(\beta_2 x)\rangle_{\mathbb{F}_{q}}$ only composed by traces.
So, consider $\alpha_1,\alpha_2,\beta_1,\beta_2 \in \F_{q^4}$ such that $H= \langle x,x^q\rangle_{\F_{q^4}}\oplus\langle \alpha_1\Tr(\beta_1 x),\alpha_2\Tr(\beta_2 x)\rangle_{\mathbb{F}_{q}}$ has dimension $10$ over $\fq$.

The proof strategy relies on two steps:\\
\textbf{Step 1:} To find explicit necessary and sufficient conditions on $\alpha_3,\beta_3 \in \F_{q^4}$ such that $\alpha_3\Tr(\beta_3 x) \in H$.\\
\textbf{Step 2:} To prove the non-existence of ten $\fq$-linearly independent traces in $H$.

In Steps 1 and 2 we will also need auxiliary results (Theorems \ref{Th:O'Sistemone} and \ref{Th:Ranghi} respectively) which are in Appendix, due to their technicality.

\begin{proof}
\textbf{Step 1:} 
Suppose that $\alpha_3 \Tr(\beta_3 x)$ is in $H$. Then there exist $\gamma,\delta\in \mathbb{F}_{q^4}$, $c_1,c_2\in \mathbb{F}_q$ such that
\begin{eqnarray*}
\alpha_3 \Tr(\beta_3 x)&=&\gamma x+\delta x^q+c_1\alpha_1 \Tr(\beta_1 x)+c_2\alpha_2 \Tr(\beta_2 x).
\end{eqnarray*}
This polynomial identity implies that $\gamma$, $\delta$, $c_1$, and $c_2$ satisfy the following system:
\begin{equation}\label{System1}
\begin{cases}
\gamma+c_1\alpha_1\beta_1+c_2\alpha_2 \beta_2&=\alpha_3\beta_3,\\
\delta+c_1\alpha_1\beta_1^q+c_2\alpha_2 \beta_2^q&=\alpha_3\beta_3^q,\\
c_1\alpha_1\beta_1^{q^2}+c_2\alpha_2 \beta_2^{q^2}&=\alpha_3\beta_3^{q^2},\\
c_1\alpha_1\beta_1^{q^3}+c_2\alpha_2 \beta_2^{q^3}&=\alpha_3\beta_3^{q^3}.\\
\end{cases}
\end{equation}
It cannot happen that $\beta_2/\beta_1$  and $\alpha_2/\alpha_1$ are both in $\fq$. Indeed, if $\beta_2=\rho \beta_1$ and $\alpha_2=\omega \alpha_1$ for some $\rho,\omega \in \fq$ then 
\[ \alpha_2 \Tr(\beta_2 x)=\rho\omega \alpha_1\Tr(\beta_1 x) \]
and hence $\dim_{\fq}(H)<10$.

Note that if $c_1=0$, then the last two equations of System  \eqref{System1} yield  $c_2\alpha_2/\alpha_3=(\beta_3/\beta_2)^{q^2}=(\beta_3/\beta_2)^{q^3}$ which implies $\beta_3/\beta_2\in \mathbb{F}_q^*$ and $\alpha_3/\alpha_2\in \mathbb{F}_q^*$. This means that $\alpha_3 \Tr(\beta_3 x)=\lambda \alpha_2 \Tr(\beta_2 x)$ for some $\lambda \in \mathbb{F}_q$. A similar conclusion arises from $c_2=0$. 
So, in these cases $\alpha_3 \Tr(\beta_3 x)\in \langle \alpha_1 \Tr(\beta_1 x),\alpha_2 \Tr(\beta_2 x)\rangle_{\fq}$.

From now on we always assume $c_1c_2\neq0$. By the last two equations in System \eqref{System1} one gets
\begin{equation}\label{eq:beta}\beta_3^{q-1}=\frac{c_1\alpha_1^{q^2}\beta_1^{q}+c_2\alpha_2^{q^2} \beta_2^{q}}{c_1\alpha_1^{q^2}\beta_1+c_2\alpha_2^{q^2} \beta_2}=\frac{\frac{c_1}{c_2}\alpha_1^{q^2}\beta_1^{q}+\alpha_2^{q^2} \beta_2^{q}}{\frac{c_1}{c_2}\alpha_1^{q^2}\beta_1+\alpha_2^{q^2} \beta_2}=:A(c_1/c_2).\end{equation}
Note that, since $\beta_3\neq0$, $c_1\alpha_1^{q^2}\beta_1^{q}+c_2\alpha_2^{q^2} \beta_2^{q}=0$ if and only if $c_1\alpha_1^{q^2}\beta_1+c_2\alpha_2^{q^2} \beta_2=0$, that is $\beta_1/\beta_2$ and $\alpha_1/\alpha_2$ both belong to $\mathbb{F}_q^*$. 
Indeed, since $c_1$ and $c_2$ are both nonzero,  
\[ \det\begin{pmatrix}
\alpha_1^{q^2}\beta_1^q & \alpha_2^{q^2}\beta_2^q \\
\alpha_1^{q^2}\beta_1 & \alpha_2^{q^2}\beta_2
\end{pmatrix}=0,
\]
which implies $\beta_1/\beta_2 \in \fq^*$.
Similarly, one can show that $\alpha_1/\alpha_2 \in \fq^*$.
However, this is a contradiction to our assumptions. 

An element $\beta_3\in \mathbb{F}_{q^4}$ satisfying Equation \eqref{eq:beta} exists if and only if $A(c_1/c_2)^{1+q+q^2+q^3}=1$, that is,
\begin{eqnarray}(c_1\alpha_1^{q^2}\beta_1^{q}+c_2\alpha_2^{q^2} \beta_2^{q})
(c_1\alpha_1^{q^3}\beta_1^{q^2}+c_2\alpha_2^{q^3} \beta_2^{q^2})
(c_1\alpha_1\beta_1^{q^3}+c_2\alpha_2 \beta_2^{q^3})
(c_1\alpha_1^{q}\beta_1+c_2\alpha_2^{q} \beta_2)
=\nonumber\\
(c_1\alpha_1^{q^2}\beta_1+c_2\alpha_2^{q^2} \beta_2)
(c_1\alpha_1^{q^3}\beta_1^{q}+c_2\alpha_2^{q^3} \beta_2^{q})
(c_1\alpha_1\beta_1^{q^2}+c_2\alpha_2 \beta_2^{q^2})
(c_1\alpha_1^{q}\beta_1^{q^3}+c_2\alpha_2^{q} \beta_2^{q^3}).\label{eq:NUM-DEN}
\end{eqnarray}

We are interested in bounding the number of non-$\mathbb{F}_q$-proportional pairs $(c_1,c_2)$, with $c_1c_2\neq0$. The above homogeneous polynomial in $c_1$ and $c_2$ is of degree at most three in both $c_1$ and $c_2$, 
and its coefficients are as follows:
\begin{itemize}
    \item[i)] the coefficient of $c_2^4$ is zero;
    \item[ii)] the coefficient of $c_1c_2^3$ is
    \[
    -\alpha_1\alpha_2^{q+q^2+q^3}\beta_1^{q^2}\beta_2^{1+q+q^3} + \alpha_1 \alpha_2^{q+q^2+q^3}\beta_1^{q^3}\beta_2^{1+q+q^2} + \alpha_1^{q}\alpha_2^{1+q^2+q^3}\beta_1\beta_2^{q+q^2+q^3} \]\[- \alpha_1^q \alpha_2^{1+q^2+q^3}\beta_1^{q^3}\beta_2^{1+q+q^2} - \alpha_1^{q^2}\alpha_2^{1+q+q^3}\beta_1\beta_2^{q+q^2+q^3} + \alpha_1^{q^2}\alpha_2^{1+q+q^3}\beta_1^{q}\beta_2^{1+q^2+q^3} \]\[- \alpha_1^{q^3}\alpha_2^{1+q+q^2}\beta_1^q\beta_2^{1+q^2+q^3} + \alpha_1^{q^3}\alpha_2^{1+q+q^2}\beta_1^{q^2}\beta_2^{1+q+q^3},
    \]
    \item[iii)] the coefficient of $c_1^2c_2^2$ is
    \[
    \alpha_1^{1+q}\alpha_2^{q^2+q^3}\beta_1^{1+q^3}\beta_2^{q+q^2} - \alpha_1^{1+q}\alpha_2^{q^2+q^3}\beta_1^{q^2+q^3}\beta_2^{1+q}  - \alpha_1^{1+q^2}\alpha_2^{q+q^3}\beta_1^{1+q^2}\beta_2^{q+q^3}\] \[+ \alpha_1^{1+q^2}\alpha_2^{q+q^3}\beta_1^{q+q^3}\beta_2^{1+q^2} - \alpha_1^{1+q^3}\alpha_2^{q+q^2}\beta_1^{q+q^2}\beta_2^{1+q^3} \] \[ + \alpha_1^{1+q^3}\alpha_2^{q+q^2}\beta_1^{q^2+q^3}\beta_2^{1+q} + \alpha_1^{q+q^2}\alpha_2^{1+q^3}\beta_1^{1+q}\beta_2^{q^2+q^3} \]\[- \alpha_1^{q+q^2}\alpha_2^{1+q^3}\beta_1^{1+q^3}\beta_2^{q+q^2} + \alpha_1^{q+q^3}\alpha_2^{1+q^2}\beta_1^{1+q^2}\beta_2^{q+q^3} \]\[- \alpha_1^{q+q^3}\alpha_2^{1+q^2}\beta_1^{q+q^3}\beta_2^{1+q^2} - \alpha_1^{q^2+q^3}\alpha_2^{1+q}\beta_1^{1+q}\beta_2^{q^2+q^3}+ \alpha_1^{q^2}\alpha_1^{q^3}\alpha_2^{1+q} \beta_1^{q+q^2}\beta_2^{1+q^3},
    \]
    \item[iv)] the coefficient of $c_1^3c_2$ is
    \[ \alpha_1^{1+q+q^2}\alpha_2^{q^3}\beta_1^{1+q+q^3}\beta_2^{q^2} - \alpha_1^{1+q+q^2} \alpha_2^{q^3}\beta_1^{1+q^2+q^3}\beta_2^q + \alpha_1^{1+q+q^3}\alpha_2^{q^2}\beta_1^{1+q^2+q^3}\beta_2^q  \]\[- \alpha_1^{1+q+q^3}\alpha_2^{q^2}\beta_1^{q+q^2+q^3}\beta_2 - \alpha_1^{1+q^2+q^3}\alpha_2^q \beta_1^{1+q+q^2}\beta_2^{q^3} + \alpha_1^{1+q^2+q^3}\alpha_2^q \beta_1^{q+q^2+q^3}\beta_2 \]\[ + \alpha_1^{q+q^2+q^3}\alpha_2\beta_1^{1+q+q^2}\beta_2^{q^3} - \alpha_1^{q+q^2+q^3}\alpha_2\beta_1^{1+q+q^3}\beta_2^{q^2}, \]
    \item[v)] the coefficient of $c_1^4$ is zero.
\end{itemize}

Therefore the number of non-$\fq$-proportional solutions $(c_1,c_2)$ with $c_1c_2\ne0$ is at most $2$, if the polynomial is non-vanishing.
Moreover, this polynomial vanishes if and only if
\begin{equation}\label{Eq:Sistemone}
\begin{cases}YZ^q - YZ^{q^2} + Y^q Z^{q^2} - Y^q  Z^{q^3} - Y^{q^2} Z + Y^{q^2}  Z^{q^3} +  Y^{q^3} Z -  Y^{q^3} Z^q=0\\
\\
Y^{q+1}  Z^{q^2+q} - Y^{q+1} Z^{q^3+q^2} -Y^{q^2+1} Z^{q^2+1} +  Y^{q^2+1} Z^{q^3+q}\\ + Y^{q^3+1} Z^{q+1}
      - Y^{q^3+1} Z^{q^2+q}      
 - Y^{q^2+q}  Z^{q^3+1} + Y^{q^2+q}Z^{q^3+q^2} +  Y^{q^3+q} Z^{q^2+1}\\ -
            Y^{q^3+q}  Z^{q^3+q} - Y^{q^3+q^2} Z^{q+1} + Y^{q^3+q^2}  Z^{q^3+1}=0\\
\\
Y^{q^2+q+1} Z^{q^3+q^2+1} - Y^{q^2+q+1} Z^{q^3+q^2+q} -  Y^{q^3+q+1} Z^{q^2+q+1} +
            Y^{q^3+q+1} Z^{q^3+q^2+q}\\ +  Y^{q^3+q^2+1} Z^{q^2+q+1} - Y^{q^3+q^2+1} Z^{q^3+q+1} +
            Y^{q^3+q^2+q}Z^{q^3+q+1} -  Y^{q^3+q^2+q} Z^{q^3+q^2+1}=0,\\
\end{cases}
\end{equation}
where $Y=\beta_1/\beta_2$ and $Z=\alpha_1/\alpha_2$.
The solutions $(Y,Z)\in \mathbb{F}_{q^4}$ of System \eqref{Eq:Sistemone} are given in Theorem \ref{Th:O'Sistemone}.
From now on we will suppose that $Y=\beta_1/\beta_2$ and $Z=\alpha_1/\alpha_2$ are solutions of System \eqref{Eq:Sistemone}. In this case, by Equation \eqref{eq:beta}, the maximum number of non-$\mathbb{F}_q$-proportional possible values of $\beta_3\in \mathbb{F}_{q^4}$ is $q-1$ when $\frac{c_1}{c_2}$ runs in $\fq^*$. By System \eqref{System1}, to each such value of $\beta_3$ there corresponds at most one value of $\alpha_3\in \mathbb{F}_{q^4}$ .

Define $\lambda=c_1/c_2\in \mathbb{F}_q^*$, so that $\beta_3=\beta_3(\lambda)$ satisfies  \begin{equation}\label{eq:condbeta3q-1}
\beta_3^{q-1}(\lambda)=\frac{\lambda\alpha_1^{q^2}\beta_1^{q}+\alpha_2^{q^2} \beta_2^{q}}{\lambda\alpha_1^{q^2}\beta_1+\alpha_2^{q^2} \beta_2}.
\end{equation}

Now, let $N(\lambda)=\lambda Z^{q^2} Y^{q}+1$ and $D(\lambda)=\lambda Z^{q^2} Y+1$, so that $\beta_3^{q-1}(\lambda)=\beta_2^{q-1}N(\lambda)/D(\lambda)$ and, by the third equation of System \eqref{System1},
\begin{equation}\label{eq:condbeta3q-12}
\alpha_3\beta_3=c_2\alpha_2\beta_2^{q^2}D^{q^2}(\lambda)/\beta_3^{q^2-1}=c_2\alpha_2\beta_2D^{q^2+q+1}(\lambda)/N^{q+1}(\lambda),
\end{equation}
and $\alpha_3\Tr(\beta_3 x)$ reads
\begin{eqnarray*}
&&\alpha_3\beta_3(x+\beta_3^{q-1}x^q+\beta_3^{q^2-1}x^{q^2}+\beta_3^{q^3-1}x^{q^3})\\
&&=\frac{c_2\alpha_2\beta_2D^{q^2+q+1}(\lambda)}{N^{q+1}(\lambda)}\left(x+\beta_2^{q-1}\frac{N(\lambda)}{D(\lambda)}x^q+\beta_2^{q^2-1}\frac{N^{q+1}(\lambda)}{D^{q+1}(\lambda)}x^{q^2}+\beta_2^{q^3-1}\frac{N^{q^2+q+1}(\lambda)}{D^{q^2+q+1}(\lambda)}x^{q^3} \right)\\
&&=c_2\left(\alpha_2\beta_2\frac{D^{q^2+q+1}(\lambda)}{N^{q+1}(\lambda)}x+\alpha_2\beta_2^q\frac{D^{q^2+q}(\lambda)}{N^{q}(\lambda)}x^q+\alpha_2\beta_2^{q^2}D^{q^2}(\lambda)x^{q^2}+\alpha_2\beta_2^{q^3}N^{q^2}(\lambda)x^{q^3} \right).
\end{eqnarray*}

\textbf{Step 2:} We prove that there exist no eight distinct values $\lambda_1,\ldots,\lambda_8\in \mathbb{F}_q^*$ such that the eight rank-one linear functions $F_i(x)= \alpha_2\beta_2\frac{D^{q^2+q+1}(\lambda_i)}{N^{q+1}(\lambda_i)}x+\alpha_2\beta_2^q\frac{D^{q^2+q}(\lambda_i)}{N^{q}(\lambda_i)}x^q+\alpha_2\beta_2^{q^2}D^{q^2}(\lambda_i)x^{q^2}+\alpha_2\beta_2^{q^3}N^{q^2}(\lambda_i)x^{q^3}$ and $F_9(x):=\alpha_1\Tr(\beta_1 x)$, $F_{10}(x):=\alpha_2\Tr(\beta_2 x)$ are $\mathbb{F}_q$-linearly independent. By Proposition \ref{prop:tenrank}, this will yield that $\mathrm{trk}(\mathcal{G})\geq11$.

Equivalently, we prove the existence of $\mu_1, \ldots,\mu_{10}\in \mathbb{F}_q$ such that
$\mu_1F_1(x)+\cdots+\mu_{10}F_{10}(x)=0$ and
not all the $\mu_i$'s are zero, so that the ten traces $F_i(x)$, $i=1,\ldots,10$, are $\fq$-linearly dependent.
Let $\mu_1, \ldots,\mu_{10}\in \mathbb{F}_q$ be such that
\begin{equation}\label{eq:condpoldeg}
\mu_1F_1(x)+\cdots+\mu_{10}F_{10}(x)=0. 
\end{equation}
In particular, Equation \eqref{eq:condpoldeg} can be seen as a polynomial identity; the coefficients of degree $q^3$ and $q^2$ yield to
$$\alpha_2 \beta_2^{q^3}\left(\sum_{i=1}^{8}\mu_i N^{q^2}(\lambda_i)+\mu_{10}\right)+\alpha_1 \beta_1^{q^3}\mu_{9}=0=\alpha_2 \beta_2^{q^2}\left(\sum_{i=1}^{8}\mu_i D^{q^2}(\lambda_i)+\mu_{10}\right)+\alpha_1 \beta_1^{q^2}\mu_{9}.$$
Since $N(\lambda_i)=\lambda_i Z^{q^2}Y^q+1$ and $D(\lambda_i)=\lambda_i Z^{q^2} Y+1$, 
$$\sum_{i=1}^{8}\mu_i +\mu_{10}=-\left(\sum_{i=1}^{8}\mu_i\lambda_i +\mu_9\right)Y^{q^3}Z  \textrm{ and } \sum_{i=1}^{8}\mu_i +\mu_{10}=-\left(\sum_{i=1}^{8}\mu_i\lambda_i +\mu_9\right)Y^{q^2}Z.$$

Suppose that $Y^{q^3}Z=Y^{q^2}Z$, which is equivalent to $Y\in\fq$. Then $N(\lambda)=D(\lambda)$ and $\beta_3(\lambda)^{q-1}=\beta_2^{q-1}=\beta_1^{q-1}$.
By System \eqref{System1}, this implies $\alpha_3 \Tr(\beta_3 x)=c_1 \alpha_1 \Tr(\beta_1 x)+c_2 \alpha_2 \Tr(\beta_2 x)\in\langle \alpha_1 \Tr(\beta_1 x),\alpha_2 \Tr(\beta_2 x)\rangle_{\fq}$ and hence the $F_i$'s, $i=1,\ldots,10$, are linearly dependent.

We can then assume that $Y^{q^3}Z\ne Y^{q^2}Z$, so that
\begin{equation}\label{Eq:primeDue}
    \mu_{9}+\sum_{i=1}^{8}\mu_i\lambda_i=\sum_{i=1}^{8}\mu_i+\mu_{10}=0.
\end{equation} 
Also, by looking at the the coefficients of degree $1$ and $q$ in Equation \eqref{eq:condpoldeg},
\begin{equation}\label{Eq:ultimeDue}
    \alpha_2\beta_2\sum_{i=1}^{8}\mu_i \frac{D^{q^2+q+1}(\lambda)}{N^{q+1}(\lambda)}+\mu_9\alpha_1\beta_1+\mu_{10}\alpha_2\beta_2=0=\alpha_2\beta_2^q\sum_{i=1}^{8}\mu_i \frac{D^{q^2+q}(\lambda)}{N^{q}(\lambda)}+\mu_9\alpha_1\beta_1^q+\mu_{10}\alpha_2\beta_2^q.
\end{equation}

Equations \eqref{Eq:ultimeDue} and their images under the $q$-Frobenius map, together with Equations \eqref{Eq:primeDue}, form a homogeneous linear system of ten equations whose matrix is 
\begin{equation}\label{eq:matrice}
M=\begin{pmatrix}
1&1&\cdots &0&1\\
\lambda_1&\lambda_2&\cdots&1&0\\
\frac{D^{q^2+q+1}(\lambda_1)}{N^{q+1}(\lambda_1)}&\frac{D^{q^2+q+1}(\lambda_2)}{N^{q+1}(\lambda_2)}&\cdots&YZ&1\\
\frac{D^{q^3+q^2+q}(\lambda_1)}{N^{q^2+q}(\lambda_1)}&\frac{D^{q^3+q^2+q}(\lambda_2)}{N^{q^2+q}(\lambda_2)}&\cdots&Y^qZ^q&1\\
\frac{D^{q^3+q^2+1}(\lambda_1)}{N^{q^3+q^2}(\lambda_1)}&\frac{D^{q^3+q^2+1}(\lambda_2)}{N^{q^3+q^2}(\lambda_2)}&\cdots&Y^{q^2}Z^{q^2}&1\\
\frac{D^{q^3+q+1}(\lambda_1)}{N^{q^3+1}(\lambda_1)}&\frac{D^{q^3+q+1}(\lambda_2)}{N^{q^3+1}(\lambda_2)}&\cdots&Y^{q^3}Z^{q^3}&1\\
\frac{D^{q^2+q}(\lambda_1)}{N^{q}(\lambda_1)}&\frac{D^{q^2+q}(\lambda_2)}{N^{q}(\lambda_2)}&\cdots&Y^{q}Z&1\\
\frac{D^{q^3+q^2}(\lambda_1)}{N^{q^2}(\lambda_1)}&\frac{D^{q^3+q^2}(\lambda_2)}{N^{q^2}(\lambda_2)}&\cdots&Y^{q^2}Z^{q}&1\\
\frac{D^{q^3+1}(\lambda_1)}{N^{q^3}(\lambda_1)}&\frac{D^{q^3+1}(\lambda_2)}{N^{q^3}(\lambda_2)}&\cdots&Y^{q^3}Z^{q^2}&1\\
\frac{D^{q+1}(\lambda_1)}{N(\lambda_1)}&\frac{D^{q+1}(\lambda_2)}{N(\lambda_2)}&\cdots&YZ^{q^3}&1\\
\end{pmatrix}.
\end{equation}

Since the rows of $M$ form orbits under the $q$-Frobenius map, the solutions of the associated system have entries in $\mathbb{F}_q$.
By Theorem \ref{Th:Ranghi}, the rank of $M$ is smaller than $10$.
Therefore, there are non-trivial solutions $(\mu_1,\ldots,\mu_{10})\in \mathbb{F}_q^{10}$ of $\mu_1F_1(x)+\cdots+\mu_{10}F_{10}(x)=0$. Then $\langle F_1(x),\ldots,F_{10}(x)\rangle_{\fq}$ has dimension smaller than $10$. This shows that $\mathrm{trk}(\mathcal{G})\geq11$. Thus, Theorem \ref{Th:main} is proved.
\end{proof}

\subsection{The tensor rank of $\mathcal{G}$ is $11$ for $q\geq5$}\label{Sec:11}

In this section we use the notations of Section \ref{Sec:non10}, and assume that $q\geq5$.
By Theorem \ref{Th:main}, $\mathrm{trk}(\mathcal{G})\geq11$.
We prove the following theorem.

\begin{theorem}\label{Th:main2}
For any prime power $q\geq 5$, we have $\mathrm{trk}(\mathcal{G})=11$.
\end{theorem}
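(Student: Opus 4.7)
By Theorem \ref{Th:main}, $\mathrm{trk}(\mathcal{G}) \geq 11$; hence only the upper bound $\mathrm{trk}(\mathcal{G}) \leq 11$ remains. Via Proposition \ref{prop:tenrank}, this amounts to exhibiting $11$ rank-one traces $\alpha_i \Tr(\beta_i x) \in \cL_{4,q}$ whose $\fq$-span $V$ contains $\mathcal{G}$; equivalently, to building an $11$-dimensional subspace $V \supseteq \mathcal{G}$ of $\cL_{4,q}$ spanned by $11$ rank-one elements.

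My plan is to extend the framework of Section \ref{Sec:non10} by adjoining a third ``base'' trace. Fix $\alpha_1,\alpha_2,\beta_1,\beta_2 \in \mathbb{F}_{q^4}^*$ so that $Y = \beta_1/\beta_2$ and $Z = \alpha_1/\alpha_2$ satisfy System \eqref{Eq:Sistemone}. Then by \eqref{eq:condbeta3q-1}--\eqref{eq:condbeta3q-12} the $10$-dimensional space $H = \mathcal{G} \oplus \langle F_1, F_2\rangle_{\fq}$, where $F_i = \alpha_i \Tr(\beta_i x)$, already contains an $\fq^*$-family of rank-one traces $G(\lambda)$. Now pick a further rank-one trace $F_3 = \alpha_3 \Tr(\beta_3 x)$, chosen generically so that $V := H \oplus \langle F_3\rangle_{\fq}$ has $\fq$-dimension $11$. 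Writing an arbitrary rank-one trace as $\alpha \Tr(\beta x) = a x + b x^q + c_1 F_1 + c_2 F_2 + c_3 F_3$ and comparing the coefficients of $x^{q^2}$ and $x^{q^3}$ expresses $\alpha \beta^{q^2}$ and $\alpha \beta^{q^3}$ as $\fq$-linear forms in $c_1, c_2, c_3$; eliminating $\alpha$ produces a norm-type compatibility condition on $\beta$ in the projective parameter $(c_1 : c_2 : c_3) \in \mathbb{P}^2(\fq)$. For generic $F_3$, this yields an $\fq$-family of rank-one traces contained in $V$ which is strictly richer than the $1$-parameter family exploited in Section \ref{Sec:non10}.

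The next step is to select $8$ traces $G_1, \ldots, G_8 \in V$ from this family and check that $F_1, F_2, F_3, G_1, \ldots, G_8$ are $\fq$-linearly independent: their $\fq$-span is then precisely the $11$-dimensional $V \supseteq \mathcal{G}$, and Proposition \ref{prop:tenrank} gives $\mathrm{trk}(\mathcal{G}) \leq 11$. Linear independence is captured by the non-singularity of an $11 \times 11$ matrix $N$ analogous to the matrix $M$ of \eqref{eq:matrice}, enlarged by one extra row and column encoding the new parameter $c_3$ attached to $F_3$.

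The hard part will be guaranteeing $\det N \neq 0$ for some admissible choice of the $8$ parameter pairs when $q \geq 5$. A natural line of attack is to first compute $\det N$ generically (as a rational function over $\overline{\fq}$ in the projective parameters of $G_1, \ldots, G_8$ together with the structural data $Y, Z, F_3$) and show it is not identically zero; then a Schwartz-Zippel style count produces an $\fq$-rational point off the vanishing locus as soon as the parameter space $\fq^* \times \fq$ is large enough. The threshold $q \geq 5$ should emerge precisely from this counting step, while the small values $q \in \{2, 3, 4\}$ fall short and are treated computationally in Section \ref{Sec:particular}.
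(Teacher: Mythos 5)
Your reduction of the statement to exhibiting $11$ linearly independent rank-one traces whose span contains $\mathcal{G}$ is the right frame and matches the paper, but the core of your construction has a genuine gap. You adjoin a \emph{generic} third trace $F_3$ to $H=\mathcal{G}\oplus\langle F_1,F_2\rangle_{\fq}$ and assert that $V=H\oplus\langle F_3\rangle_{\fq}$ then contains an $\fq$-family of rank-one traces ``strictly richer'' than the $\lambda$-family of Step 1; this is exactly what needs proof and, as stated, it does not go through. Comparing the $x^{q^2}$- and $x^{q^3}$-coefficients gives, as you say, a norm condition on a ratio of linear forms in $(c_1:c_2:c_3)$, which over $\fq$ cuts out a quartic curve in $\mathbb{P}^2(\fq)$. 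You would need (i) this curve to have sufficiently many $\fq$-points off the line $c_3=0$ (for a generic $F_3$ the pairs $(F_1,F_3)$ and $(F_2,F_3)$ do \emph{not} satisfy System \eqref{Eq:Sistemone}, so Step 1 yields at most two extra non-proportional traces on each of those lines, and genericity over $\overline{\fq}$ says nothing about $\fq$-points of the residual quartic), and (ii) the corresponding traces to be $\fq$-independent together with $F_1,F_2,F_3$. Point (ii) cannot be settled by the Schwartz--Zippel count you sketch, because the parameters of $G_1,\dots,G_8$ do not range over a product set like $\fq^*\times\fq$: they are constrained to lie on the norm curve, so nonvanishing of a generic determinant does not produce admissible $\fq$-rational specializations. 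Note moreover that, by Theorem \ref{Th:Ranghi}, the rank-one traces with $c_3=0$ span at most a $6$-dimensional space, so at least five of your $G_j$ must come from the unanalyzed part of the curve; and the moral of Theorem \ref{Th:main} is precisely that ``the rank-one traces should span the ambient space'' can fail for \emph{every} choice of auxiliary traces, so it must be excluded by an explicit construction rather than a genericity claim.

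The paper sidesteps all of this by not taking the third trace generic: it fixes one base trace $F_0=\alpha_0\Tr(\beta_0x)$ and chooses $F,F^{\prime}$ so that both $(Y,Z)$ and $(Y^{\prime},Z^{\prime})$ are solutions of type (C3) in Theorem \ref{Th:O'Sistemone}. Step 1 then produces two explicit one-parameter families, one inside $\mathcal{G}+\langle F_0,F\rangle_{\fq}$ and one inside $\mathcal{G}+\langle F_0,F^{\prime}\rangle_{\fq}$; taking $F,F^{\prime},F_0$ together with four members of each family gives $11$ completely explicit rank-one traces, whose independence is certified by a single determinant (computed with MAGMA after the specializations $\lambda_i=\lambda_1^i$, $Z^{q^2+1}=1$, $Z^{\prime}=Z^q$), shown nonzero by a degree count against the $q^2+1$ norm-one elements for $q\geq16$ and by direct check for $5\leq q<16$. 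The threshold $q\geq5$ enters through the need for four distinct admissible values $\lambda_1,\lambda_1^2,\lambda_1^3,\lambda_1^4\in\fq^*$, not through a counting argument of the type you propose. To repair your approach you would have to replace ``generic $F_3$'' by a structured choice of this kind (or else control the $\fq$-points of the quartic and prove the independence on it), which is the missing substance of the proof.
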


By Proposition \ref{prop:tenrank}, it is enough to show the existence of $11$ $\fq$-linearly independent trace functions whose $\fq$-span contains $\mathcal{G}$.
Our key tool is Step 1 in Section \ref{Sec:non10}.

\begin{proof}
Let $\alpha_0,\beta_0\in\mathbb{F}_{q^4}^*$ and $\lambda_1,\lambda_1^{\prime},\ldots,\lambda_4,\lambda_4^{\prime}\in\mathbb{F}_q^*$ with $\lambda_i\ne\lambda_j$ and $\lambda_i^{\prime}\ne\lambda_j^{\prime}$ for $i\ne j$.
Let $\alpha,\beta,\alpha^{\prime},\beta^{\prime}\in\mathbb{F}_{q^4}^*$ be such that $Y:=\beta/\beta_0$, $Z:=\alpha/\alpha_0$, $Y^{\prime}:=\beta^{\prime}/\beta_0$ and $Z^{\prime}:=\alpha^{\prime}/\alpha_0$ satisfy $Z,Z^{\prime}\notin\mathbb{F}_{q^2}$ and $Y=1/Z^{q^2+q}$, $Y^{\prime}=1/(Z^{\prime})^{q^2+q}$.
By Theorem \ref{Th:O'Sistemone}, $(Y,Z)$ and $(Y^{\prime},Z^{\prime})$ are solutions of System \eqref{Eq:Sistemone}. As in the proof of Theorem \ref{Th:main}, by \eqref{eq:condbeta3q-1} and \eqref{eq:condbeta3q-12} for any $i\in\{1,\ldots,4\}$ there exist $\alpha_i,\beta_i,\alpha_i^{\prime},\beta_i^{\prime}\in\mathbb{F}_{q^4}^*$ such that 
\begin{itemize}
    \item $\beta_i^{q-1}=\beta_0^{q-1}N(\lambda_i)/D(\lambda_i)$;
    \item $(\beta_i^{\prime})^{q-1}=\beta_0^{q-1}N^{\prime}(\lambda_i^{\prime})/D^{\prime}(\lambda_i^{\prime})$;
    \item $\alpha_i=c_i\alpha_0\beta_0 D(\lambda_i)^{q^2+q+1}/N(\lambda_i)^{q+1}$;
    \item $\alpha_i^{\prime}=c_i^{\prime}\alpha_0\beta_0 D^{\prime}(\lambda_i^{\prime})^{q^2+q+1}/N^{\prime}(\lambda_i^{\prime})^{q+1}$,
\end{itemize}
where $c_i,c_i^{\prime}\in\mathbb{F}_q^*$, $N(\lambda_i)=\lambda_i Z^{q^2}Y^q+1$, $N^{\prime}(\lambda_i^{\prime})=\lambda_i^{\prime} (Z^{\prime})^{q^2}(Y^{\prime})^q+1$, $D(\lambda_i)=\lambda_i Z^{q^2}Y+1$ and $D^{\prime}(\lambda_i^{\prime})=\lambda_i^{\prime} (Z^{\prime})^{q^2}Y^{\prime}+1$.

Define the rank-one functions $F_0(x)=\alpha_0 \Tr(\beta_0 x)$, $F(x)=\alpha \Tr(\beta x)$, $F^{\prime}(x)=\alpha^{\prime}\Tr(\beta^{\prime} x)$ and, for $i\in\{1,\ldots,4\}$, $F_i(x):=\frac{1}{c_i}\alpha_i \Tr(\beta_i x)$, $F_i^{\prime}(x):=\frac{1}{c_i^{\prime}}\alpha_i^{\prime} \Tr(\beta_i^{\prime} x)$.
All such functions are elements of the linear $\mathbb{F}_{q^4}$-space $V=\mathcal{G}+\langle F(x),F^{\prime}(x),F_0(x)\rangle_{\mathbb{F}_{q^4}}$ because of Step 1 in Section \ref{Sec:non10}.
We show that, for some suitable choice of the elements $\lambda_i,\lambda_i^{\prime},\alpha_0,\beta_0,\alpha,\beta,\alpha^{\prime},\beta^{\prime}$, the $11$ elements $F(x)$, $F^{\prime}(x)$, $F_0(x)$, $F_1(x),\ldots,F_4(x)$, $F_1^{\prime}(x),\ldots,F_4^{\prime}(x)$ are $\fq$-linearly independent, which implies $\mathrm{trk}(\mathcal{G})=11$.

Let $\mu,\mu^{\prime},\mu_0,\mu_1,\mu_1^{\prime},\ldots,\mu_4,\mu_4^{\prime}\in\fq$ be such that
\[
\mu F(x)+\mu^{\prime}F^{\prime}(x)+\mu_0 F_0(x)+\mu_1 F_1(x)+\cdots \mu_4 F_4(x)+\mu_1^{\prime} F_1^{\prime}(x)+\cdots \mu_4^{\prime} F_4^{\prime}(x)=0,
\]
which can be seen a polynomial identity and hence implies
\begin{equation}\label{eq:sistquasi}
\begin{cases}
\mu\alpha\beta+\mu^{\prime}\alpha^{\prime}\beta^{\prime}+\mu_0\alpha_0\beta_0+\alpha_0\beta_0\sum_{i=1}^{4}\left(\mu_i \frac{D^{q^2+q+1}(\lambda_i)}{N^{q+1}(\lambda_i)}+\mu_i^{\prime} \frac{(D^{\prime})^{q^2+q+1}(\lambda_i)}{(N^{\prime})^{q+1}(\lambda_i)}\right)&=0\\
\mu\alpha\beta^q+\mu^{\prime}\alpha^{\prime}(\beta^{\prime})^q+\mu_0\alpha_0\beta_0^q+\alpha_0\beta_0^q\sum_{i=1}^{4}\left(\mu_i \frac{D^{q^2+q}(\lambda_i)}{N^{q}(\lambda_i)}+\mu_i^{\prime} \frac{(D^{\prime})^{q^2+q}(\lambda_i)}{(N^{\prime})^{q}(\lambda_i)}\right)&=0\\
\mu\alpha\beta^{q^2}+\mu^{\prime}\alpha^{\prime}(\beta^{\prime})^{q^2}+\mu_0\alpha_0\beta_0^{q^2}+\alpha_0\beta_0^{q^2}\sum_{i=1}^{4}\left(\mu_i D^{q^2}(\lambda_i)+\mu_i^{\prime} (D^{\prime})^{q^2}(\lambda_i)\right)&=0\\
\mu\alpha\beta^{q^3}+\mu^{\prime}\alpha^{\prime}(\beta^{\prime})^{q^3}+\mu_0\alpha_0\beta_0^{q^3}+\alpha_0\beta_0^{q^3}\sum_{i=1}^{4}\left(\mu_i N^{q^2}(\lambda_i)+\mu_i^{\prime} (N^{\prime})^{q^2}(\lambda_i)\right)&=0.\\
\end{cases}
\end{equation}
The four equations in \eqref{eq:sistquasi}, together with their images under the $q$-Frobenius map, provide a homogeneous linear system of twelve equations with solutions in $\mathbb{F}_q^{11}$, of which $(\mu,\mu^{\prime},\mu_0,\mu_1,\mu_1^{\prime},\ldots,\mu_4,\mu_4^{\prime})$ is a solution. The matrix $M$ of such a system is
\[
\begin{pmatrix}
YZ & Y^{\prime}Z^{\prime} & 1 & \frac{D^{q^2+q+1}(\lambda_1)}{N^{q+1}(\lambda_1)} & \frac{(D^{\prime})^{q^2+q+1}(\lambda_1^{\prime})}{(N^{\prime})^{q+1}(\lambda_1^{\prime})} & \cdots & \frac{D^{q^2+q+1}(\lambda_4)}{N^{q+1}(\lambda_4)} & \frac{(D^{\prime})^{q^2+q+1}(\lambda_4^{\prime})}{(N^{\prime})^{q+1}(\lambda_4^{\prime})} \\
Y^{q}Z^{q} & (Y^{\prime})^{q}(Z^{\prime})^{q} & 1 & \frac{D^{q^3+q^2+q}(\lambda_1)}{N^{q^2+q}(\lambda_1)} & \frac{(D^{\prime})^{q^3+q^2+q}(\lambda_1^{\prime})}{(N^{\prime})^{q^2+q}(\lambda_1^{\prime})} & \cdots & \frac{D^{q^3+q^2+q}(\lambda_4)}{N^{q^2+q}(\lambda_4)} & \frac{(D^{\prime})^{q^3+q^2+q}(\lambda_4^{\prime})}{(N^{\prime})^{q^2+q}(\lambda_4^{\prime})} \\
Y^{q^2}Z^{q^2} & (Y^{\prime})^{q^2}(Z^{\prime})^{q^2} & 1 & \frac{D^{1+q^3+q^2}(\lambda_1)}{N^{q^3+q^2}(\lambda_1)} & \frac{(D^{\prime})^{1+q^3+q^2}(\lambda_1^{\prime})}{(N^{\prime})^{q^3+q^2}(\lambda_1^{\prime})} & \cdots & \frac{D^{1+q^3+q^2}(\lambda_4)}{N^{q^3+q^2}(\lambda_4)} & \frac{(D^{\prime})^{1+q^3+q^2}(\lambda_4^{\prime})}{(N^{\prime})^{q^3+q^2}(\lambda_4^{\prime})} \\
Y^{q^3}Z^{q^3} & (Y^{\prime})^{q^3}(Z^{\prime})^{q^3} & 1 & \frac{D^{q+1+q^3}(\lambda_1)}{N^{1+q^3}(\lambda_1)} & \frac{(D^{\prime})^{q+1+q^3}(\lambda_1^{\prime})}{(N^{\prime})^{1+q^3}(\lambda_1^{\prime})} & \cdots & \frac{D^{q+1+q^3}(\lambda_4)}{N^{1+q^3}(\lambda_4)} & \frac{(D^{\prime})^{q+1+q^3}(\lambda_4^{\prime})}{(N^{\prime})^{1+q^3}(\lambda_4^{\prime})} \\

Y^{q}Z & (Y^{\prime})^{q}Z^{\prime} & 1 & \frac{D^{q^2+q}(\lambda_1)}{N^{q}(\lambda_1)} & \frac{(D^{\prime})^{q^2+q}(\lambda_1^{\prime})}{(N^{\prime})^{q}(\lambda_1^{\prime})} & \cdots & \frac{D^{q^2+q}(\lambda_4)}{N^{q}(\lambda_4)} & \frac{(D^{\prime})^{q^2+q}(\lambda_4^{\prime})}{(N^{\prime})^{q}(\lambda_4^{\prime})} \\
Y^{q^2}Z^{q} & (Y^{\prime})^{q^2}(Z^{\prime})^{q} & 1 & \frac{D^{q^3+q^2}(\lambda_1)}{N^{q^2}(\lambda_1)} & \frac{(D^{\prime})^{q^3+q^2}(\lambda_1^{\prime})}{(N^{\prime})^{q^2}(\lambda_1^{\prime})} & \cdots & \frac{D^{q^3+q^2}(\lambda_4)}{N^{q^2}(\lambda_4)} & \frac{(D^{\prime})^{q^3+q^2}(\lambda_4^{\prime})}{(N^{\prime})^{q^2}(\lambda_4^{\prime})} \\
Y^{q^3}Z^{q^2} & (Y^{\prime})^{q^3}(Z^{\prime})^{q^2} & 1 & \frac{D^{1+q^3}(\lambda_1)}{N^{q^3}(\lambda_1)} & \frac{(D^{\prime})^{1+q^3}(\lambda_1^{\prime})}{(N^{\prime})^{q^3}(\lambda_1^{\prime})} & \cdots & \frac{D^{1+q^3}(\lambda_4)}{N^{q^3}(\lambda_4)} & \frac{(D^{\prime})^{1+q^3}(\lambda_4^{\prime})}{(N^{\prime})^{q^3}(\lambda_4^{\prime})} \\
YZ^{q^3} & Y^{\prime}(Z^{\prime})^{q^3} & 1 & \frac{D^{q+1}(\lambda_1)}{N(\lambda_1)} & \frac{(D^{\prime})^{q+1}(\lambda_1^{\prime})}{N^{\prime}(\lambda_1^{\prime})} & \cdots & \frac{D^{q+1}(\lambda_4)}{N(\lambda_4)} & \frac{(D^{\prime})^{q+1}(\lambda_4^{\prime})}{N^{\prime}(\lambda_4^{\prime})} \\

Y^{q^2}Z & (Y^{\prime})^{q^2}Z^{\prime} & 1 & D^{q^2}(\lambda_1) & (D^{\prime})^{q^2}(\lambda_1^{\prime}) & \cdots & D^{q^2}(\lambda_4) & (D^{\prime})^{q^2}(\lambda_4^{\prime}) \\
Y^{q^3}Z^{q} & (Y^{\prime})^{q^3}(Z^{\prime})^{q} & 1 & D^{q^3}(\lambda_1) & (D^{\prime})^{q^3}(\lambda_1^{\prime}) & \cdots & D^{q^3}(\lambda_4) & (D^{\prime})^{q^3}(\lambda_4^{\prime}) \\
YZ^{q^2} & Y^{\prime}(Z^{\prime})^{q^2} & 1 &
D(\lambda_1) & D^{\prime}(\lambda_1^{\prime}) & \cdots & D(\lambda_4) & D^{\prime}(\lambda_4^{\prime}) \\
Y^{q}Z^{q^3} & (Y^{\prime})^{q}(Z^{\prime})^{q^3} & 1 & D^{q}(\lambda_1) & (D^{\prime})^{q}(\lambda_1^{\prime}) & \cdots & D^{q}(\lambda_4) & (D^{\prime})^{q}(\lambda_4^{\prime}) \\

Y^{q^3}Z & (Y^{\prime})^{q^3}Z^{\prime} & 1 & N^{q^2}(\lambda_1) & (N^{\prime})^{q^2}(\lambda_1^{\prime}) & \cdots & N^{q^2}(\lambda_4) & (N^{\prime})^{q^2}(\lambda_4^{\prime}) \\
YZ^{q} & Y^{\prime}(Z^{\prime})^{q} & 1 & N^{q^3}(\lambda_1) & (N^{\prime})^{q^3}(\lambda_1^{\prime}) & \cdots & N^{q^3}(\lambda_4) & (N^{\prime})^{q^3}(\lambda_4^{\prime}) \\
Y^{q}Z^{q^2} & (Y^{\prime})^{q}(Z^{\prime})^{q^2} & 1 &
N(\lambda_1) & N^{\prime}(\lambda_1^{\prime}) & \cdots & N(\lambda_4) & N^{\prime}(\lambda_4^{\prime}) \\
Y^{q^2}Z^{q^3} & (Y^{\prime})^{q^2}(Z^{\prime})^{q^3} & 1 & N^{q}(\lambda_1) & (N^{\prime})^{q}(\lambda_1^{\prime}) & \cdots & N^{q}(\lambda_4) & (N^{\prime})^{q}(\lambda_4^{\prime}) \\
\end{pmatrix}.
\]
Since $|\mathbb{F}_q^*|\geq4$, we can choose  $\lambda_1\ne0$, $\lambda_1^2\ne1$ and $\lambda_1^3\ne1$ and then 
\begin{itemize}
    \item $\lambda_2=\lambda_1^2$, $\lambda_3=\lambda_1^3$ and $\lambda_4=\lambda_1^4$;
    \item $\lambda_i=\lambda_{i-4}$ for any $i \in \{5,6,7,8\}$.
\end{itemize}
We also choose $\alpha_0,\beta_0,\alpha,\beta,\alpha^{\prime},\beta^{\prime}$ such that $Z^{q^2+1}=1$ and $Z^\prime=Z^q$.
By direct computation with MAGMA,
\begin{eqnarray*}
\det(M) &=& \lambda_1^{40}(\lambda_1-1)^{12}(\lambda_1+1)^4(\lambda_1^2+\lambda_1+1)^2(Z^2-1)^{6q+6}(Z^q-Z)^4(Z^{q+1}-1)^4 \\
&&  \cdot (Z^{3q+2} - Z^{2q+1} - 2Z^{q+2} + Z^q + Z^3)(Z^{3q+1} - Z^{2q+2} + Z^{q+3} - 2Z^{q+1} + 1)\ \\
&&  \cdot (Z^{3q} + Z^{2q+3} - 2Z^{2q+1} - Z^{q+2} + Z)(Z^{3q+3} - 2Z^{2q+2} + Z^{2q} - Z^{q+1} + Z^2)^2.
\end{eqnarray*}
For some $Z\in\mathbb{F}_{q^4}\setminus\mathbb{F}_{q^2}$ satisfying $Z^{q^2+1}=1$, we have $\det(M)\ne0$; for $q\geq16$ this follows because $q^2+1$ is greater than the sum of the degrees of the polynomials in parentheses, while for $q<16$ this follows by direct checking.
Therefore, for a suitable choice of $\lambda_i,\lambda_i^{\prime},\alpha_0,\beta_0,\alpha,\beta,\alpha^{\prime},\beta^{\prime}$, the matrix $M$ has full rank $11$ and hence $$(\mu,\mu^{\prime},\mu_0,\mu_1,\mu_1^{\prime},\ldots,\mu_4,\mu_4^{\prime})=(0,\ldots,0).$$
Thus, $F(x)$, $F^{\prime}(x)$, $F_0(x)$, $F_1(x)$, $F_1^{\prime}(x)$, $\ldots$, $F_4(x)$, $F_4^{\prime}(x)$ are $\fq$-linearly independent and $\mathcal{G}$ has tensor rank $11$. 
\end{proof}

\section{Tensor rank of $n\times n$ generalized Gabidulin codes for $n\leq4$}\label{Sec:particular}

We compute the tensor rank of some generalized Gabidulin code $\mathcal{C}\subseteq\mathcal{L}_{n,q}$ of dimension $k$ over $\mathbb{F}_{q^n}$ for $n\leq4$.
Notice that, up to equivalence, $\mathcal{C}=\mathcal{G}_{k,1}$.

By Section \ref{Sec:11} and \cite[Table 1]{ByrneCotardo}, the open cases are exactly for $n,k,q$ as in the table below.
Since the lower bound on the tensor rank of $\mathcal{G}_{k,1}$ is $nk+n-k$, we start with the exhaustive search for $t:=n-k$ rank-one functions $\alpha_i \Tr(\beta_i x)\in\mathbb{F}_{q^n}[x]$ such that the rank-one functions in $\mathcal{C}+\langle \alpha_1 \Tr(\beta_1 x),\ldots,\alpha_{t} \Tr(\beta_{t} x) \rangle_{\mathbb{F}_q}$ generate an $\fq$-space $U$ of dimension $nk+t$.
If this succeeds, then we compute explicitly a perfect basis of $U$ (i.e. a basis of pure tensors). Otherwise, we increase $t$ by $1$ and perform the same search again.
In this way we obtain the tensor rank and a perfect basis  $B=\{\eta^i \Tr(\eta^j x) \colon (i,j)\in I\}$ for $\mathcal{G}_{k,1}$, where $\eta$ is a primitive element of $\mathbb{F}_{q^n}$ and $I\subseteq \{0,\ldots,q^n-2\}^2$.
The precise value of the tensor rank is obtained for all but two cases, namely $n=4$, $k=1$ and $q\in\{4,5\}$; in these cases, an upper bound is provided by means of a random search.

\tabcolsep=1 mm
\begin{table}[H]\label{tab}
    \caption{Tensor rank of some generalized Gabidulin codes $\mathcal{G}_{k,1}\subseteq \mathcal{L}_{n,q}$}
    \centering
    \begin{tabular}{|c|c|c|c|c|c|c|}
    \hline
         $n$&$k$&$q$&${\rm TR}(\mathcal{G}_{k,1})$&MTR&${\rm MinPol}(\xi)$&$I$\\
    \hline\hline
         $3$&$2$&$2$&$7$&yes&$x^3+x+1$&(2,0),(3,5),(1,2),(0,4),(6,6),(5,1),(4,3)\\
         \hline
         $4$&$1$&$4$&$\in\{7,8\}$&?&\begin{tabular}{c}$x^8+x^4+$\\$x^3+x^2+1$\end{tabular}& \begin{tabular}{c} (73,168),(22,202),(0,180),(69,249),\\(80,90),(1,96),(33,213),(67,162) \end{tabular} \\
         \hline
         $4$&$1$&$5$&$\in\{7,8\}$&?&\begin{tabular}{c}$x^4+4x^2$\\$+4x+2$\end{tabular}&\begin{tabular}{c} (16,432),(135,81),(21,405),(10,132),\\ (56,593),(24,556),(74,569),(54,268) \end{tabular} \\
         \hline
         $4$&$2$&$2$&$12$&no&$x^4+x+1$&\begin{tabular}{c}
         (14,12),(9,14),(4,0),(7,5),(1,13),(5,10),\\
         (6,3),(2,1),(12,7),(3,2),(0,4),(13,6)
         \end{tabular}\\
         \hline
         $4$&$2$&$3$&$11$&no&$x^4-x^3-1$&\begin{tabular}{c}
         (23,8),(0,13),(28,14),(32,46),(2,1),(19,26),\\
        (1,18),(6,37),(7,12),(36,28),(21,59)\end{tabular}\\
         \hline
         $4$&$2$&$4$&$11$&no&\begin{tabular}{c}$x^8+x^4+$\\$x^3+x^2+1$\end{tabular}&
         \begin{tabular}{c}
         (13,133),(56,175),(20,71),(30,31),(81,124),(0,51),\\
        (3,88),(76,34),(70,215),(29,132),(9,24)\end{tabular}
         \\
         \hline
         $4$&$3$&$2$&$13$&yes&$x^4+x+1$&\begin{tabular}{c}
         (12,13),(8,6),(4,4),(9,4),(1,5),(3,1),(14,9),\\
         (11,0),(10,2),(0,7),(6,10),(7,8),(5,12)
         \end{tabular}\\
         \hline
         $4$&$3$&$3$&$13$&yes&$x^4-x^3-1$&\begin{tabular}{c} (31,3),(29,49),(0,56),(25,61),(7,75),(26,18),(22,30),\\ (20,36),(39,19),(18,2),(13,57),(32,40),(3,47) \end{tabular} \\
         \hline
    \end{tabular}
\end{table}

\begin{remark}
The fourth, fifth, and sixth rows of the table completes the proof of Theorem \ref{Th:main3}.
\end{remark}

\begin{remark}
Notice that, although only one perfect basis is showed in the table, the computations provide a much larger number of perfect bases in each case. Therefore, no generator tensor of such codes is identifiable.
\end{remark}

\section{Appendix}\label{Sec:Appendix}

\begin{theorem}\label{Th:O'Sistemone}
Let $Y,Z\in \mathbb{F}_{q^4}$. Then $(Y,Z)$ is a solution of System \eqref{Eq:Sistemone} if and only if one of the following pairwise mutually exclusive conditions holds:
\begin{enumerate}
    \item[(C1)] $Y\in\mathbb{F}_q$ or $Z\in\fq$;
    \item[(C2)] $Y\notin\mathbb{F}_{q^2}$ and $Z=\rho Y^{q+1}$ for some $\rho\in\fq^*$;
    \item[(C3)] $Z\notin\mathbb{F}_{q^2}$ and $Y=\rho/Z^{q^2+q}$ for some $\rho\in\fq^*$.
\end{enumerate}
\end{theorem}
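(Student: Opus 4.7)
The plan is to recast System \eqref{Eq:Sistemone} as a Galois-orbit condition by recognizing that the underlying identity \eqref{eq:NUM-DEN} is a norm identity. Since $c_1,c_2\in\mathbb{F}_q$ satisfy $c_i^{q^j}=c_i$, the four factors on the LHS of \eqref{eq:NUM-DEN} are, up to reordering, the $q^j$-conjugates of the single element $c_1\alpha_1^{q}\beta_1+c_2\alpha_2^{q}\beta_2$; likewise the RHS factors are the $q^j$-conjugates of $c_1\alpha_1^{q^2}\beta_1+c_2\alpha_2^{q^2}\beta_2$. Setting $X_1=\alpha_1^{q^2}\beta_1$, $X_2=\alpha_2^{q^2}\beta_2$, $X_1^*=\alpha_1^q\beta_1$, $X_2^*=\alpha_2^q\beta_2$, and $N=N_{\mathbb{F}_{q^4}/\mathbb{F}_q}$, identity \eqref{eq:NUM-DEN} takes the compact form
\[
N(c_1X_1^*+c_2X_2^*)=N(c_1X_1+c_2X_2).
\]
The three equations of System \eqref{Eq:Sistemone} will be identified as the coefficients of $c_1^3c_2$, $c_1^2c_2^2$, $c_1c_2^3$ in this polynomial identity; the $c_1^4$ and $c_2^4$ coefficients vanish automatically from $N(X_1)=N(X_1^*)$ and $N(X_2)=N(X_2^*)$ by $q$-power invariance of the norm. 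The degenerate cases $Y=0$ or $Z=0$ fall into (C1), so I henceforth assume $Y,Z\neq 0$.

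Next I reformulate the polynomial identity as a Galois-orbit condition. Viewed as polynomials in $\lambda=c_1/c_2$ with matching leading and constant terms, both sides split as $N(X_1)\prod_{i=0}^3(\lambda+X_2^{q^i}/X_1^{q^i})$ and $N(X_1^*)\prod_{i=0}^3(\lambda+(X_2^*)^{q^i}/(X_1^*)^{q^i})$, so the identity holds iff the multi-sets of roots coincide, iff $\theta^*:=X_2^*/X_1^*$ belongs to the Frobenius orbit of $\theta:=X_2/X_1$. A direct computation gives $\theta=1/(Z^{q^2}Y)$ and $\theta^*=1/(Z^qY)$, so the condition $\theta^*=\theta^{q^k}$ rewrites as $Z^{q-q^{k+2}}=Y^{q^k-1}$ for some $k\in\{0,1,2,3\}$. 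This is the cornerstone of the plan; once it is established, the theorem reduces to exponent bookkeeping.

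To extract the three cases, I inspect each $k$, using that the kernel of $x\mapsto x^{q-1}$ on $\mathbb{F}_{q^4}^*$ equals $\mathbb{F}_q^*$. For $k=0$ one gets $Z\in\mathbb{F}_q$ and for $k=3$ one gets $Y\in\mathbb{F}_q$, both contained in (C1); $k=2$ becomes $Z^{q-1}=Y^{q^2-1}=(Y^{q+1})^{q-1}$, yielding (C2); and $k=1$ becomes $Y^{q-1}Z^{q^3-q}=1$, equivalent to $YZ^{q^2+q}\in\mathbb{F}_q^*$, which is (C3). The converse is automatic: substituting each of (C1), (C2), (C3) into the orbit condition exhibits the required value of $k$, so each such $(Y,Z)$ solves System \eqref{Eq:Sistemone}.

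Finally, I verify pairwise mutual exclusivity under the stated hypotheses. If (C2) held with $Y\in\mathbb{F}_{q^2}$, then $Y^{q+1}=N_{\mathbb{F}_{q^2}/\mathbb{F}_q}(Y)\in\mathbb{F}_q$, forcing $Z\in\mathbb{F}_q$ and so (C1); the requirement $Y\notin\mathbb{F}_{q^2}$ in (C2) excludes this overlap, and symmetrically for (C3). If (C2) and (C3) both held, substituting $Z=\rho Y^{q+1}$ into $YZ^{q+q^2}\in\mathbb{F}_q$ would give $\rho^{q+q^2}Y^{1+(q+1)(q+q^2)}\in\mathbb{F}_q$, and since $1+(q+1)(q+q^2)=(1+q+q^2+q^3)+q^2$, the left-hand side equals $\rho^{q+q^2}N(Y)Y^{q^2}$, forcing $Y^{q^2}\in\mathbb{F}_q$ and hence $Y\in\mathbb{F}_q\subset\mathbb{F}_{q^2}$, contradicting (C2). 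The main obstacle is the clerical reindexing of Step 1 that exposes the norm structure of \eqref{eq:NUM-DEN}; thereafter everything is straightforward Galois-orbit and exponent arithmetic.
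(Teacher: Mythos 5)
Your proof is correct, but it follows a genuinely different route from the paper's. You exploit the fact that both sides of \eqref{eq:NUM-DEN} are products of Frobenius-conjugate linear forms, so that (after fixing $\alpha_2=\beta_2=1$, $\alpha_1=Z$, $\beta_1=Y$ and dividing by $c_2^4$) System \eqref{Eq:Sistemone} becomes the equality of two quartics in $\lambda=c_1/c_2$ with the same (nonzero) leading and constant coefficients; comparing root multisets reduces everything to the single orbit condition $1/(Z^qY)=\bigl(1/(Z^{q^2}Y)\bigr)^{q^k}$ for some $k\in\{0,1,2,3\}$, and the four values of $k$ give exactly (C1), (C2), (C3) after routine exponent manipulation. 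The paper instead works directly with the three equations in the conjugate variables $y_i=Y^{q^i}$, $z_i=Z^{q^i}$, eliminates via iterated resultants such as ${\rm Res}_{z_1}({\rm Res}_{z_0}(f_3,f_1),{\rm Res}_{z_0}(f_2,f_1))$, and runs a longer case analysis on the resulting factors, with several sub-cases needing further resultants. Your argument is shorter and more conceptual (it explains where the conditions $Z=\rho Y^{q+1}$ and $Y=\rho/Z^{q^2+q}$ come from, namely $\theta^*$ being a Frobenius conjugate of $\theta$), at the price of relying on the identification, carried out in Section \ref{Sec:non10}, of the three equations of \eqref{Eq:Sistemone} with the middle coefficients of \eqref{eq:NUM-DEN}, and on the nondegeneracy $Y,Z\ne0$ which you correctly set aside first; the paper's proof is self-contained at the level of the stated system but computationally heavier. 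Two small points to tidy: in the exclusivity paragraph the implication you state for (C1) versus (C2) goes the wrong way --- what is needed is that (C2) together with $Z\in\mathbb{F}_q$ forces $Y^{q+1}=Z/\rho\in\mathbb{F}_q$, hence $Y^{q^2-1}=1$ and $Y\in\mathbb{F}_{q^2}$, a contradiction (the computation is the same, but state it in this direction); and in the classification step you should say explicitly that the $k=2$ (resp.\ $k=1$) relation yields (C2) (resp.\ (C3)) only when $Y\notin\mathbb{F}_{q^2}$ (resp.\ $Z\notin\mathbb{F}_{q^2}$), the excluded subcase being absorbed into (C1) exactly as you note at the end.
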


\begin{proof}
For any $i=0,1,2,3$, write $y_i=Y^{q^i}$ and $z_i=Z^{q^i}$. Then System \eqref{Eq:Sistemone} reads
\begin{equation}\label{Eq:Sistemonecambiovar}
\begin{cases}
f_1(y_0,y_1,y_2,y_3,z_0,z_1,z_2,z_3)=0\\
f_2(y_0,y_1,y_2,y_3,z_0,z_1,z_2,z_3)=0\\
f_3(y_0,y_1,y_2,y_3,z_0,z_1,z_2,z_3)=0,\\
\end{cases}
\end{equation}
where
\begin{eqnarray*}
f_1(y_0,y_1,y_2,y_3,z_0,z_1,z_2,z_3)&=&y_0z_1 - y_0z_2 + y_1z_2 - y_1z_3 - y_2z_0 + y_2z_3 +  y_3z_0 -  y_3z_1,\\
f_2(y_0,y_1,y_2,y_3,z_0,z_1,z_2,z_3)&=&y_0y_1z_1z_2 - y_0y_1z_2z_3 -y_0y_2z_0z_2 + y_0y_2z_1z_3 + y_0y_3 z_0z_1 \\
& &      - y_0y_3 z_1z_2      
 - y_1y_2 z_0z_3 + y_1y_2z_2z_3 + y_1y_3 z_0z_2 -
            y_1y_3 z_1z_3 \\
& &            - y_2y_3 z_0z_1 + y_2y_3 z_0z_3,\\
f_3(y_0,y_1,y_2,y_3,z_0,z_1,z_2,z_3)&=&y_0y_1y_2 z_0z_2z_3 - y_0y_1y_2 z_1z_2z_3 - y_0y_1y_3 z_0z_1z_2 +
            y_0y_1y_3 z_1z_2z_3 \\
& &            +  y_0y_2y_3 z_0z_1z_2 - y_0y_2y_3 z_0z_1z_3 +
            y_1y_2y_3z_0z_1z_3 - y_1y_2y_3 z_0z_2z_3.
\end{eqnarray*}
We denote by ${\rm Res}_x(g_1,g_2)$ the resultant of two (multivariate) polynomials $g_1$ and $g_2$ with respect to the indeterminate $x$.
We have
\begin{eqnarray*}
{\rm Res}_{z_1}\left({\rm Res}_{z_0}(f_3,f_1),{\rm Res}_{z_0}(f_2,f_1)\right)&=&
-(z_2-z_3)^2 y_3 (y_2-y_3)(y_2z_3-y_3z_2)\cdot\\
&&(y_1-y_2)^2(y_0-y_3)(y_0-y_1)^2(y_0z_2-y_3z_3)\cdot\\
&&(y_0z_2-y_2z_3)^2(y_0y_2-y_1y_3)(y_1-y_3).
\end{eqnarray*}
Thus, every solution $(Y,Z)\in\mathbb{F}_{q^4}^2$ of System \eqref{Eq:Sistemone} with $YZ\ne0$ satisfies one of the following conditions.
\begin{enumerate}
    \item $y_3=0$, that is $Y^{q^3}=0$, a contradiction.
    \item $z_2-z_3=0$, that is $Z\in\fq$. Indeed $(Y,Z)$ is a solution of System \eqref{Eq:Sistemone} whenever $Z\in\fq$. In the following cases we can then assume $Z\notin\fq$.
    \item $y_2-y_3=0$, or $y_1-y_2=0$, or $y_0-y_3=0$, or $y_0-y_1=0$. This is equivalent to $Y\in\fq$, and indeed $(Y,Z)$ is a solution of System \eqref{Eq:Sistemone} whenever $Y\in\fq$. In the following cases we can then assume $Y\notin\fq$.
    \item $y_1-y_3=0$, that is $Y\in\mathbb{F}_{q^2}$. Then System \eqref{Eq:Sistemone} reads
    \[
    \begin{cases}
    Z^{q^2+q+1}-Z^{q^3+q+1}+Z^{q^3+q^2+1}-Z^{q^3+q^2+q}=0\\
    Z-Z^q+Z^{q^2}-Z^{q^3}=0\\
    (Y^q+Y)(Z^{q^2+1}-Z^{q^3+q})=0.
    \end{cases}
    \]
    The first equation yields $Z^{q^3+q^2+1}+Z^{q^2+q+1}\in\fq$, that is $Z^{q^2+1}(Z^{q^2}+Z)^{q}\in\fq$, while the second equation yields $Z^{q^2}+Z\in\fq$. Therefore $Z^{q^2+1}\in\fq$, and hence the third equation is also satisfied. Now, the two conditions $Z^{q^2}+Z\in\fq$ and $Z^{q^2+1}\in\fq$ yield $Z^{q^3}=Z^{q^2}-Z^q+Z$ and $Z^{q^3}=Z^{q^2-q+1}$. This implies $Z(Z^{q-1}-1)^{q+1}=0$, whence $Z\in\fq$.
    In the following cases we can then assume $Y\notin\mathbb{F}_{q^2}$.
    \item $y_2z_3-y_3z_2=0$, that is $Y^{q-1}=Z^{q-1}$, and hence $Y=\rho Z$ for some $\rho\in\fq^*$. Then $y_i=\rho z_i$ for any $i=1,\ldots,3$, and System \eqref{Eq:Sistemonecambiovar} reads
    \[
    \begin{cases}
    g_1:=z_0z_1 - 2z_0z_2 + z_0z_3 + z_1z_2 - 2z_1z_3 + z_2z_3=0\\
    g_2:=z_0^2z_1z_3 - z_0^2z_2^2 + z_0z_1^2z_2 - 2z_0z_1z_2z_3 + z_0z_2z_3^2 - z_1^2z_3^2 + z_1z_2^2z_3=0.
    \end{cases}
    \]
    From ${\rm Res}_{z_1}(g_1,g_2)=0$ it follows that $(z_2-z_3)^2(z_3-z_0)^2(z_2-z_0)^2=0$, which is equivalent to $Z\in\mathbb{F}_{q^2}$. Then $Y=\rho Z\in\mathbb{F}_{q^2}$.
    \item $y_0z_2-y_3z_3=0$, that is $Y^{q-1}=1/Z^{(q^2+q+1)(q-1)}$, and hence $Y=\rho /Z^{q^2+q+1}$ for some $\rho\in\mathbb{F}_q^*$. Then $y_i=\rho/(z_iz_{i+1}z_{i+2})$ for any $i=0,\ldots,3$ (where the indices are modulo $4$) and System \eqref{Eq:Sistemonecambiovar} reads
    \[
    \begin{cases}
    h_1:=z_0 z_1 - 2z_0 z_2 + z_0 z_3 + z_1 z_2 - 2 z_1 z_3 + z_2 z_3 = 0 \\
    h_2:=z_0 ^2 z_1 z_3 - z_0 ^2 z_2^2 + z_0 z_1^2 z_2 - 2 z_0 z_1 z_2 z_3 + z_0 z_2 z_3^2 - z_1^2 z_3^2 + z_1 z_2^2 z_3 = 0.\\
    \end{cases}
    \]
    From ${\rm Res}_{z_3}(h_1,h_2)=0$ it follows that $z_2=z_0$.
    Then $Z\in\mathbb{F}_{q^2}$ and hence $Y=\rho/(Z^2 Z^q)$. By System \eqref{Eq:Sistemone}, this implies $Z\in\fq$.
    \item $y_0z_2-y_2z_3=0$, that is $YZ^{q^2}-Y^{q^2}Z^{q^3}=0$. This is equivalent to $Z^{q-1}=Y^{q^2-1}$, and hence to $Z=\rho Y^{q+1}$ for some $\rho\in\mathbb{F}_q^*$. By direct checking, this is indeed a solution of System \eqref{Eq:Sistemone} for any $Y\in\mathbb{F}_{q^4}$. If we require $Y\notin\mathbb{F}_{q^2}$, this also implies $Z\notin\fq$.
    \item $y_0y_2-y_1y_3=0$, that is $y_3=y_0y_2/y_1$, or equivalently $Y^{q^3}=Y^{1+q^2}/Y^q$. Then System \eqref{Eq:Sistemonecambiovar} reads
    \[
    \begin{cases}
    p_1:=y_0 y_1 z_1 - y_0 y_1 z_2 + y_0 y_2 z_0- y_0 y_2 z_1 + y_1^2 z_2 - y_1^2 z_3 - y_1 y_2 z_0+ y_1 y_2 z_3 =0\\
    p_2:=y_0 y_1 z_0 z_1 z_2 - y_0 y_1 z_1 z_2 z_3 - y_0 y_2 z_0 z_1 z_2 + y_0 y_2 z_0 z_1 z_3 - y_1^2 z_0 z_2 z_3 \\\qquad + y_1^2 z_1 z_2 z_3 - y_1 y_2 z_0 z_1 z_3 + y_1 y_2 z_0 z_2 z_3 = 0\\
    p_3:=y_0^2 y_2 z_0 z_1 - y_0^2 y_2 z_1 z_2 + y_0 y_1^2 z_1 z_2 - y_0 y_1^2 z_2 z_3 - y_0 y_2^2 z_0 z_1\\\qquad + y_0 y_2^2 z_0 z_3 -
            y_1^2 y_2 z_0 z_3 + y_1^2 y_2 z_2 z_3 = 0.
    \end{cases}
    \]
    From ${\rm Res}_{y_2}(p_1,p_2)=0$ it follows that $(z_0-z_2)(z_0z_2-z_1z_3)(y_0z_1-y_1z_3)=0$.
    \begin{itemize}
        \item[8.1] Suppose $z_0-z_2=0$, i.e. $Z\in\mathbb{F}_{q^2}$, whence also $z_3=z_1$. Then, by System \eqref{Eq:Sistemonecambiovar}, either $Y\in\fq$ or $Z\in\fq$.
        \item[8.2] Suppose $z_0z_2-z_1z_3=0$, so that $z_3=z_0z_2/z_1$. By System \eqref{Eq:Sistemonecambiovar},
        \[
        \begin{cases}
        \ell_1:= y_0^2  y_2 z_0  z_1 - y_0^2  y_2  z_1  z_2 +  y_0  y_1^2  z_1  z_2 -  y_0  y_1^2  z_2^2 -  y_0  y_2^2 z_0  z_1\\
        \qquad+  y_0  y_2^2 z_0  z_2 -  y_1^2  y_2 z_0  z_2 + y_1^2  y_2  z_2^2 = 0 \\
        \ell_2:=y_0 z_0  z_1 -  y_0  z_1  z_2 -  y_1 z_0  z_2 +  y_1  z_1  z_2 -  y_2 z_0  z_1 +  y_2 z_0  z_2 = 0 \\
        \ell_3:=y_0  y_1  z_1 -  y_0  y_1  z_2 +  y_0  y_2 z_0 -  y_0  y_2  z_1 -  y_1  y_2 z_0 + y_1  y_2 z_2 = 0.
        \end{cases}
        \]
        From ${\rm Res}_{y_2}(\ell_1,\ell_3)=0$ it follows $y_0z_1-y_1z_2=0$, so that $Y^{q-1}=(1/Z^q)^{q-1}$. This implies $Y=\rho/Z^q$ for some $\rho\in\fq^*$, whence $y_i=\rho/z_{i+1}$ for any $i=0,\ldots,3$ (indices modulo $4$). Then, by  System \eqref{Eq:Sistemonecambiovar},
        \[
        \begin{cases}
        m_1:=z_0^2 z_1 - z_0 z_1 z_2 - 2 z_0 z_1 z_3 + z_0  z_2 z_3 + z_1^2 z_3 = 0 \\
        
        m_2:=z_0^2 z_2 + z_0 z_1^2 - 2 z_0 z_1 z_2 - z_0 z_1 z_3 + z_1 z_2 z_3 = 0 \\

        m_3:=z_0^2 z_1 + z_0^2 z_2 + z_0 z_1^2 - 3 z_0 z_1 z_2 - 3 z_0 z_1 z_3 + z_0 z_2 z_3 + z_1^2 z_3 + z_1 z_2 z_3 = 0.
        \end{cases}
        \]
        From ${\rm Res}_{z_3}(m_1,m_2)=0$ it follows that $Z\in\fq$.
        \item[8.3] Suppose $y_0z_1-y_1z_3=0$. This implies $Y^{q-1}=(1/Z^{q^2+q})^{q-1}$, whence $Y=\rho/Z^{q^2+q}$ for some $\rho\in\fq^*$. Then $y_i=\rho/(z_{i+2}z_{i+1})$ for any $i=0,\ldots,3$ (indices modulo $4$), which is indeed a solution of System \eqref{Eq:Sistemonecambiovar} and provide a solution $(Y,Z)$ of System \eqref{Eq:Sistemone}. Notice that the condition $Y=\rho/Z^{q^2+q}$ with $Z\in\mathbb{F}_{q^4}$ implies $Y^{q^3}=Y^{1+q^2}/Y^q$. 
        For such a solution, the require $Z\notin\mathbb{F}_{q^2}$ is equivalent to $Y\notin\fq$. 
        Also, if $Y,Z\in\mathbb{F}_{q^4}^*$ are such that $Y=\rho/Z^{q^2+q}$ and $Z=\rho^\prime Y^{q+1}$ with $\rho,\rho^{\prime}\in\fq$, then $Y^{q^3+q^2+q+1+q^2}=\rho/\rho^\prime\in\fq^*$, whence $Y\in\fq$.
    \end{itemize}
\end{enumerate}
\end{proof}

\begin{theorem}\label{Th:Ranghi}
Let $(Y,Z)\in\mathbb{F}_{q^4}^2$ be a solution of System \eqref{Eq:Sistemone}, and $M$ be the matrix in \eqref{eq:matrice}.
\begin{itemize}
    \item[(R1)] If $Y\in\fq$ or $Z\in\fq$, then ${\rm rank}(M)=2$.
    \item[(R2)] If $Y\notin\mathbb{F}_{q^2}$ and $Z=\rho Y^{q+1}$ for some $\rho\in\fq^*$, then ${\rm rank}(M)=6$.
    \item[(R3)] If $Z\notin\mathbb{F}_{q^2}$ and $Y=\rho/ Z^{q^2+q}$ for some $\rho\in\fq^*$, then ${\rm rank}(M)=6$. 
\end{itemize}
\end{theorem}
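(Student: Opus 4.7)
The plan distinguishes the easy case (R1) from the symmetric hard cases (R2) and (R3).

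For (R1), I would first assume $Y\in\mathbb{F}_q$. The identity $Y^{q^j}=Y$ forces $D^{q^j}(\lambda)=N^{q^j}(\lambda)$ for every $j$, so that every ratio $D^{\,\cdot}/N^{\,\cdot}$ appearing in rows $3$--$10$ of $M$ collapses to a single affine polynomial $\lambda Z^{q^k}Y+1$, with the corresponding entries in columns $9$ and $10$ equal to $YZ^{q^k}$ and $1$ respectively. A direct inspection then shows that rows $7$--$10$ duplicate rows $3$--$6$ and that each of rows $3$--$6$ equals $YZ^{q^{k}}\cdot(\text{row }2)+(\text{row }1)$ for an appropriate $k$. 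Since $\lambda_1,\ldots,\lambda_8$ are distinct and nonzero, rows $1$ and $2$ are independent, giving $\mathrm{rank}(M)=2$. The case $Z\in\mathbb{F}_q$ is handled symmetrically via the analogous identity $N^{q^j}(\lambda)=D^{q^{j+1}}(\lambda)$, which again forces all rows $3$--$10$ to live in the span of rows $1$ and $2$.

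For (R2), the substitution $Z=\rho Y^{q+1}$ combined with the scalar $W:=\rho\,\mathrm{Nm}_{\mathbb{F}_{q^4}/\mathbb{F}_q}(Y)\in\mathbb{F}_q^*$ yields the clean formulas $D^{q^j}(\lambda)=P_{j+1}(\lambda)/Y^{q^{j+1}}$ and $N^{q^j}(\lambda)=P_j(\lambda)/Y^{q^j}$, where $P_j(\lambda):=\lambda W+Y^{q^j}\in\mathbb{F}_{q^4}[\lambda]$ and indices are taken modulo $4$. After scaling each row by a suitable fixed power of $Y$, the entries in rows $3$--$10$ reduce to ratios of the form $P_a(\lambda_i)P_b(\lambda_i)/P_c(\lambda_i)$ with $\{a,b,c\}\subset\{0,1,2,3\}$, and the eight rows split into four pairs $(3,7), (4,8), (5,9), (6,10)$ sharing a common numerator $P_aP_b$. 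I would then prove (i) $\mathrm{rank}(M)\le 6$ by exhibiting four explicit $\mathbb{F}_{q^4}$-linear dependencies among the rows---one per pair---each obtained by combining the two rows of a pair with rows $1$ and $2$ and clearing the denominator $\prod_j P_j(\lambda_i)$, the resulting identities boiling down to elementary manipulations of the degree-one polynomials $P_j$; and (ii) $\mathrm{rank}(M)\ge 6$ by selecting a $6\times 6$ submatrix consisting of rows $1, 2$, one row per pair, columns $9, 10$, and four well-chosen $\lambda$-columns, and verifying that its determinant is a nonzero polynomial in $Y,\rho,\lambda_1,\ldots,\lambda_8$; here the hypothesis $Y\notin\mathbb{F}_{q^2}$ guarantees that the $Y^{q^j}$'s are suitably distinct on the relevant pairs of indices, and the distinctness and non-vanishing of the $\lambda_i$'s rule out Vandermonde-type collapses. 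Case (R3) is handled by the parallel substitution $Y=\rho/Z^{q^2+q}$, which yields analogous formulas in terms of $\widetilde P_j(\lambda):=\lambda\rho+Z^{q^j}$; the same four dependencies and a similar non-degenerate minor argument give $\mathrm{rank}(M)=6$.

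The main obstacle is uncovering the four ``hidden'' linear dependencies in (R2) and (R3): they do not follow from any evident symmetry of $M$ and become visible only once the substitution is repackaged through the norm-type scalar $W$ (resp.\ $\rho$) and the linear polynomials $P_j$ (resp.\ $\widetilde P_j$). Once this parametrization is in place, both the upper and the lower bound on $\mathrm{rank}(M)$ follow from polynomial identities in $\mathbb{F}_{q^4}[\lambda]$ that, as the authors themselves remark, are tedious but essentially routine bookkeeping.
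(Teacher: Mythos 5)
Your handling of (R1) and your reparametrization in (R2)--(R3) are sound: with $Z=\rho Y^{q+1}$, $W=\rho Y^{1+q+q^2+q^3}$ and $P_j(\lambda)=\lambda W+Y^{q^j}$ one indeed has $N^{q^j}=P_j/Y^{q^j}$ and $D^{q^j}=P_{j+1}/Y^{q^{j+1}}$, so that, up to scaling each row by a nonzero power of $Y$, rows $3$--$10$ restricted to the $\lambda$-columns become $P_2P_3/P_0$, $P_0P_3/P_1$, $P_0P_1/P_2$, $P_1P_2/P_3$, $P_2P_3/P_1$, $P_0P_3/P_2$, $P_0P_1/P_3$, $P_1P_2/P_0$ evaluated at $\lambda_1,\ldots,\lambda_8$. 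But the core of your step (i) fails: no $\mathbb{F}_{q^4}$-linear dependency is supported on a numerator-sharing pair together with rows $1,2$. Take the pair $(3,7)$: a relation $a\,P_2P_3/P_0+b\,P_2P_3/P_1+c+d\lambda=0$ at the eight distinct $\lambda_i$ becomes, after multiplying by $P_0P_1$, a polynomial of degree at most $3$ vanishing at eight points, hence an identity; since $Y\notin\mathbb{F}_{q^2}$ forces $Y,Y^q,Y^{q^2},Y^{q^3}$ to be pairwise distinct, the $P_j$ are pairwise coprime, and reducing modulo $P_2$ and modulo $P_3$ gives $c=d=0$, after which $aP_1+bP_0\equiv0$ gives $a=b=0$. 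The dependencies you need are supported on the pairs sharing a common \emph{denominator}, i.e.\ $(3,10),(4,7),(5,8),(6,9)$; for each pole $P_j$ a suitable combination of the two corresponding rows with rows $1,2$ kills the eight $\lambda$-columns, but you must then still check --- and this is not automatic --- that the same combination annihilates columns $9$ and $10$: the partial-fraction argument only shows that the $10\times 8$ block has rank at most $6$ (its rows lie in $\langle 1,\lambda,1/P_0,1/P_1,1/P_2,1/P_3\rangle$), and the whole content of (R2)--(R3) is that the last two columns do not raise the rank. The mis-pairing also infects step (ii): ``one row per pair'' under your pairing allows choices such as rows $7$ and $4$, which share the pole $P_1$ and make every resulting $6\times6$ minor vanish identically; you need one row per denominator class, e.g.\ rows $3,4,5,6$, which together with rows $1,2$ and the last six columns is exactly the paper's submatrix $S_6$.

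For comparison, the paper argues differently in (R2)--(R3): it evaluates explicitly $\det(S_6)\neq0$ (rows $1$--$6$, columns $5$--$10$) and $\det(S_7)=0$ (rows $1$--$7$, columns $4$--$10$), and then uses the elementwise Frobenius action on the rows together with relabelling of the $\lambda_i$ to cover all the relevant $7\times7$ minors, concluding that rows $7$--$10$ lie in the span of rows $1$--$6$ and hence ${\rm rank}(M)=6$; your (R1) argument is the row-space version of the paper's column relation $M_{(j)}=\lambda_jM_{(9)}+M_{(10)}$ and is fine. Your rational-function parametrization could yield a more conceptual proof of (R2)--(R3) than the paper's determinant computations, but only after the pairing is corrected and the column-$9$/column-$10$ identities are actually verified.
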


\begin{proof}
For any $i=1,\ldots,10$, denote respectively by $M^{(i)}$ and $M_{(i)}$ the $i$-th row and the $i$-th column of $M$. Note that $M_{(9)}$ and $M_{(10)}$ are linearly independent.
Note also that, by construction of $M$, any possible $\fq$-linear combination of the columns of $M$ needs to be checked only on the first, second, third, and seventh rows of $M$.
\begin{itemize}
    \item[(R1)] Suppose $Y\in\fq$. Then $N(\lambda)=D(\lambda)$, whence $\frac{D^{q^2+q+1}(\lambda)}{N^{q+1}(\lambda)}=\frac{D^{q^2+q}(\lambda)}{N^{q}(\lambda)}=\lambda ZY+1$. Therefore $M_{(j)}=\lambda_j M_{(9)}+M_{(10)}$ for any $j=1,\ldots,8$, and ${\rm rank}(M)=2$.
    
    Suppose $Z\in\fq$. Similarly, one has $N(\lambda)=D^q(\lambda)$ and $M_{(j)}=\lambda_j M_{(9)}+M_{(10)}$ for any $j=1,\ldots,8$, so that ${\rm rank}(M)=2$.
\end{itemize}
For $i=6,7$, denote by $S_i$ the $i\times i$ submatrix of $M$ given by the first $i$ rows and the last $i$ columns of $M$.
If for any distinct $\lambda_4,\ldots,\lambda_8\in\fq^*$
one has $\det(S_6)\ne0$ and $\det(S_7)=0$, then this is enough to conclude that ${\rm rank}(M)=6$ for any distinct $\lambda_1,\ldots,\lambda_8\in\fq^*$. In fact, the use of the column $M_{(j)}$, $j\in\{1,2,3\}$, instead of $M_{(4)}$, implies the replacement of $\lambda_4$ with $\lambda_j$ in $\det(S_7)$, and in this way the fourth column becomes any of the remaining columns.  
If $S_7^{\prime}$ is obtained from $S_7$ by replacing $M^{(7)}$ with $M^{(i)}$, $i\in\{8,9,10\}$, then the elementwise $q^{i-7}$-power $\Phi$ maps $M^{(7)}$ to $M^{(i)}$, while $M^{(1)}$ and $M^{(2)}$ are fixed by $\Phi$, and $M^{(3)},M^{(4)},M^{(5)},M^{(6)}$ are cyclically permuted by $\Phi$.
Therefore $\Phi$ maps the rows of $S_7$ to the rows of $S_7^{\prime}$, so that $\det(S_7)=0$ if and only if $\det(S_7^{\prime})=0$.
\begin{itemize}
\item[(R2)] Suppose $Y\notin\mathbb{F}_{q^2}$ and $Z=\rho Y^{q+1}$ for some $\rho\in\fq^*$. Then $\det(S_6)$ equals
\begin{eqnarray*}
\rho^{10}Y^{4(q^3+q^2+q+1)}\left(\prod_{i=5}^8\lambda_i\right)\left(\prod_{5\leq i<j\leq8}(\lambda_i-\lambda_j)\right)\cdot\\
\left(\prod_{i=0}^3(Y^{q^i}-Y^{q^{i+1}})^2\right)\left(\prod_{i=0}^1(Y^{q^i}-Y^{q^{i+2}})^3\right),
\end{eqnarray*}
and hence $\det(S_6)\ne0$ because $Y\notin\mathbb{F}_{q^2}$ and the $\lambda_i$'s are nonzero and distinct. Also, $\det(S_7)=0$. Therefore, ${\rm rank}(M)=6$.
\item[(R3)] Suppose $Z\notin\mathbb{F}_{q^2}$ and $Y=\rho/ Z^{q^2+q}$ for some $\rho\in\fq^*$. Then
\begin{eqnarray*}
\det(S_6)=\rho^{10}\left(\prod_{i=5}^8\lambda_i\right)\left(\prod_{5\leq i<j\leq8}(\lambda_i-\lambda_j)\right)\cdot\\
\left(\prod_{i=0}^3(Z^{q^i}-Z^{q^{i+1}})^2\right)\left(\prod_{i=0}^1(Z^{q^i}-Z^{q^{i+2}})^3\right)
\end{eqnarray*}
is nonzero. Also, $\det(S_7)=0$. Therefore, ${\rm rank}(M)=6$.
\end{itemize}
\end{proof}

\section*{Acknowledgments}

The authors of this paper would like to thank Alessandro Neri for fruitful discussions.
This research was supported by the Italian National Group for Algebraic and Geometric Structures and their Applications (GNSAGA - INdAM). The last two authors were supported by the project ``VALERE: VAnviteLli pEr la RicErca" of the University of Campania ``Luigi Vanvitelli''.

\end{document}